\documentclass[11pt]{article}

\usepackage[a4paper,margin=27mm]{geometry}
\usepackage[utf8]{inputenc}
\usepackage[T1]{fontenc}
\usepackage{amsthm,amsmath,amsfonts,amssymb}
\usepackage{mathtools}
\usepackage{graphicx}
\usepackage{booktabs}
\usepackage{geometry}
\usepackage{authblk}
\usepackage{natbib}
\usepackage{hyperref}
\usepackage{caption}
\usepackage{float}
\usepackage{setspace}
\usepackage{microtype}
\usepackage{xcolor}
\usepackage{longtable}

\hypersetup{colorlinks=true,citecolor=blue,linkcolor=blue,urlcolor=blue}
\microtypesetup{expansion=false}
\let\footnote\thanks

\providecommand{\eqref}[1]{(\ref{#1})}
\providecommand{\figrule}{}

\theoremstyle{definition}
\newtheorem{rem}{Remark}[section]
\theoremstyle{plain}
\newtheorem{defn}{Definition}[section]
\newtheorem{prop}{Proposition}[section]
\newtheorem{thm}{Theorem}[section]
\newtheorem{cor}{Corollary}[section]
\newtheorem{lem}{Lemma}[section]

\newcommand{\Alts}{\mathcal{X}}
\newcommand{\V}{V}
\newcommand{\E}{E}
\newcommand{\R}{\mathbb{R}}
\newcommand{\one}{\mathbf{1}}
\newcommand{\Reach}{\mathrm{Reach}}

\title{Borda Aggregation Dynamics of Preference Orderings on Networks}
\author{Moses Boudourides \\[0.5em]
{\small School of Professional Studies, Northwestern University, Evanston, IL, USA} \\
{\small \texttt{Moses.Boudourides@northwestern.edu}}}
\date{}

\begin{document}
\maketitle

\begin{abstract}
We introduce and analyze a discrete-time network process in which each node holds a (weak) preference ordering over a finite set of alternatives and updates by local \emph{Borda aggregation}. At each step, a node forms a weighted average (row-stochastic random-walk normalization) of its neighbors' Borda score vectors and projects the aggregated score back to a weak order. 
Updates are \emph{bounded}: in each round, a node advances by at most one step
along a shortest path in the preference graph toward its local Borda target,
reflecting incremental preference revision.
Our emphasis is dynamical: we develop sufficient conditions, stated directly in terms of graph topology and weights, for (i) \emph{self-sustained oscillations} in the absence of persistent sources, and (ii) \emph{forced oscillations} under contrarian persistent camps. We show that these oscillatory phenomena are robust features of the Borda-aggregation coupling, driven by the topology of the influence network and the geometry of the preference space rather than by the bounded-step rule per se. We also record robustness (structural stability) away from score-tie hyperplanes and contrast synchronous (Variant~S) and asynchronous (Variant~A) updating.
\end{abstract}

\noindent\textbf{Keywords:} preference diffusion, Borda aggregation, oscillations, persistent nodes, bipartite forcing, random-walk normalization

\section{Introduction and social-scientific motivation}
\label{sec:intro}

A wide class of diffusion and opinion-formation models treats each node's state as a scalar or vector and updates by averaging or bounded-confidence rules
\citep{degroot1974,friedkin1990,hegselmann2002,acemoglu2011}.
Across social and political science, however, many outcomes are intrinsically \emph{ordinal}:
actors rank candidates, policies, coalition partners, or consumption bundles, often allowing ties.
In such settings the primitive state is a preference ordering rather than a point in Euclidean space.

While classical opinion-dynamics models typically represent individual
states as scalar or vector quantities, such cardinal representations
may fail to capture the intrinsically ordinal nature of many
decision-making processes. In settings where agents rank alternatives,
possibly with indifferences, reducing preferences to numerical values
can obscure the combinatorial structure of rankings and the discrete
transitions between them. The present framework instead models
preferences directly as orderings and studies their evolution under
network interaction.

The social space sustaining these processes is naturally \emph{relational}.
Actors interact through a social influence network whose edges encode exposure, authority, credibility, or control;
formally, this empirical network is typically directed and weighted, and analytically it is often modeled through a row-stochastic influence matrix
\citep{french1956,harary1959,abelson1964,degroot1974,friedkin1990,proskurnikovtempo2017}.
From this viewpoint, the state of the system at time $t$ is a distributed configuration of individual outcomes over the network nodes,
dynamically updated by the weighted pattern of influence ties.

A second ingredient is the \emph{geometry of outcomes}.
When outcomes are discrete and structured---as with preference rankings---it is natural to represent the set of admissible states as a graph:
two rankings are neighbors if they differ by an elementary change (e.g., splitting/merging an indifference class).
This is the perspective we adopt: each node's state is a \emph{weak order} over $m$ alternatives, i.e., a ranking that allows ties.
The set of weak orders $\Omega(m)$ is finite but highly structured; it is canonically organized by the weak-order lattice and its cover graph $H$,
which encodes the elementary preference moves.
This combinatorial structure also aligns with classical social-choice foundations in which individual preferences are modeled as independent orderings over alternatives
\citep{arrow1951,sen1970,mascolell1995,schofield2003}.

Our interaction rule couples these two structures---a directed weighted influence network $G$ and the move graph $H$ on preferences---through a Borda-type aggregation step.
At each node $i$, neighbors' current rankings are mapped to score vectors and averaged using the row-stochastic weights,
producing a local Borda score vector that is then projected back to a weak order (with ties arising on score-equality boundaries)
\citep{borda1781,saari1995,terzopoulouendriss2021}.
Compared with scalar averaging, this creates three distinctive sources of nontrivial dynamics:
(i) the state space is finite and non-Euclidean; (ii) aggregation occurs in score space and returns to order space via a projection that can be discontinuous on tie boundaries;
and (iii) updates are \emph{bounded} by $H$: agents revise preferences incrementally, advancing at most one step per round toward the local aggregate ranking, which models deliberative friction and institutional constraints common in political and committee processes.

The boundedness assumption is a behavioral and institutional constraint rather than a technical trick:
it represents friction, limited attention, incremental deliberation, and stepwise revision of positions, common in political choice and committee processes.
Formally, boundedness is implemented by a \emph{bounded-step rule} that moves at most one edge in $H$ toward the current local target at each time step.
This gives a discrete-time dynamical system on $\Omega(m)^{\V}$ driven jointly by topology and weights.
The oscillatory phenomena studied in this paper---self-oscillations and forced oscillations---are robust features of the Borda-aggregation coupling: they are driven by the topology of the influence network $G$ and the geometry of the preference space $H$, and persist whether the update step is bounded or unbounded, as established in Proposition~\ref{prop:unbounded_fixed_periodic}.

This networked preference-dynamics perspective also differs sharply from the canonical ``static'' posture of social choice,
where aggregation is typically conceived as a one-shot mapping from a profile of individual preferences to a collective outcome and is assumed to be free of influence, persuasion, or manipulation.
Here, aggregation is \emph{endogenized} as a dynamical process sustained by social influence:
the eventual configuration, if it exists, is an emergent outcome of interaction and need not coincide with any individual's initial ranking.
Accordingly, equilibrium states have natural interpretations.
Homogeneous equilibria correspond to consensus outcomes (a network-mediated convergence of rankings),
while heterogeneous fixed points correspond to pluralist stable configurations whose existence depends on network structure, weights, and boundedness.

Most importantly for political and social interpretation, the model can exhibit \emph{oscillations}.
Even in deterministic dynamics, interaction may fail to settle, cycling through preference configurations.
We distinguish two conceptually different sources of periodic behavior.
First, in the absence of persistent sources, \emph{self-oscillations} can arise endogenously from the initial profile together with parity/topological mechanisms mediated by $G$ and $H$.
Second, when some nodes are \emph{persistent} (stubborn) sources---a standard modeling device in theories of social influence and polarization---consistent with heterogeneous dispositions of influence susceptibility
\citep{friedkin2011,acemoglu2011}, their fixed rankings can act as exogenous anchors representing parties, elites, media, or institutions.
When persistent nodes form \emph{contrarian camps} pinned to antipodal rankings in $\Omega(m)$, they can generate \emph{forced oscillations} among the remaining nodes, even when self-oscillations are absent.

The methodological contribution of this paper is therefore a mathematically tractable bridge between
network influence dynamics and the discrete geometry of preference orderings.
Our main results are stated directly in terms of directed topology, row-stochastic weights (random-walk normalization), and the bounded-step rule on the preference move graph.
We develop rigorous results on (a) existence and multiplicity of fixed points; and (b) mechanisms producing nontrivial periodic behavior,
with emphasis on (i) self-oscillations without persistence under synchronous updating (Variant~S) and (ii) forced oscillations induced by contrarian persistent camps.
As a contrast, we also analyze asynchronous updating (Variant~A), where parity mechanisms and periodicity behave differently.

It is useful to distinguish between two levels of results developed in this paper. 
Some properties of the dynamics are \emph{generic}, in the sense that they follow 
from the finite state space and the deterministic update rule and therefore hold 
for broad classes of networks and weight configurations (for example, eventual 
periodicity). Other results depend on \emph{specific structural configurations} 
of the influence network or of the preference move graph, such as directed cycles 
or bipartite topology, which create mechanisms capable of sustaining oscillations. 
Accordingly, the results below separate general dynamical properties of the model 
from oscillatory mechanisms that rely on particular network structures.

\section{Model}
\label{sec:model}

\subsection{State space}
\label{sec:state-space}
Let $\Alts=\{1,\dots,m\}$ be a finite set of $m > 1$ alternatives.\footnote{Sometimes,
when $m = 3$, we denote $\Alts=\{x, y, z\}$ and, when $m = 4$, $\Alts=\{x, y, z, u\}$.}
Let $\Omega(m)$ denote the finite set of all weak orders (total preorders, i.e.,
complete, transitive and reflexive relations) on $\Alts$, and write $\Omega:=\Omega(m)$.
In the terminology of social choice theory, individual preferences are typically
represented as complete and transitive orderings of alternatives (weak orders),
a standard convention in the literature \citep{sen1970}.
$\Omega$'s cardinality is the \emph{ordered Bell} (or \emph{Fubini}) number
\begin{equation}
\label{eq:fubini}
|\Omega(m)| \,=\, F(m)\,=\,\sum_{k=1}^{m} k!\,S(m,k),
\end{equation}
where $S(m,k)$ are Stirling numbers of the second kind \citep{stanleyEC1}.
In particular, $|\Omega(3)|=F(3)=13$ and $|\Omega(4)|=F(4)=75$.
Figure~\ref{fig:omega-lattices-m3m4} shows the Hasse (cover) graphs of the
weak-order lattices for $m=3$ and $m=4$; in this manuscript we denote the
Hasse (cover) graph of a weak-order lattice as $H=(\Omega,E_H)$, a connected
graph whose edges correspond to elementary changes between adjacent weak orders
(cover relations in the lattice).
We use $d_H$ to denote the shortest-path distance on $H$ (with unit edge lengths).
\begin{figure}
\figrule
\centering
\includegraphics[width=0.98\textwidth]{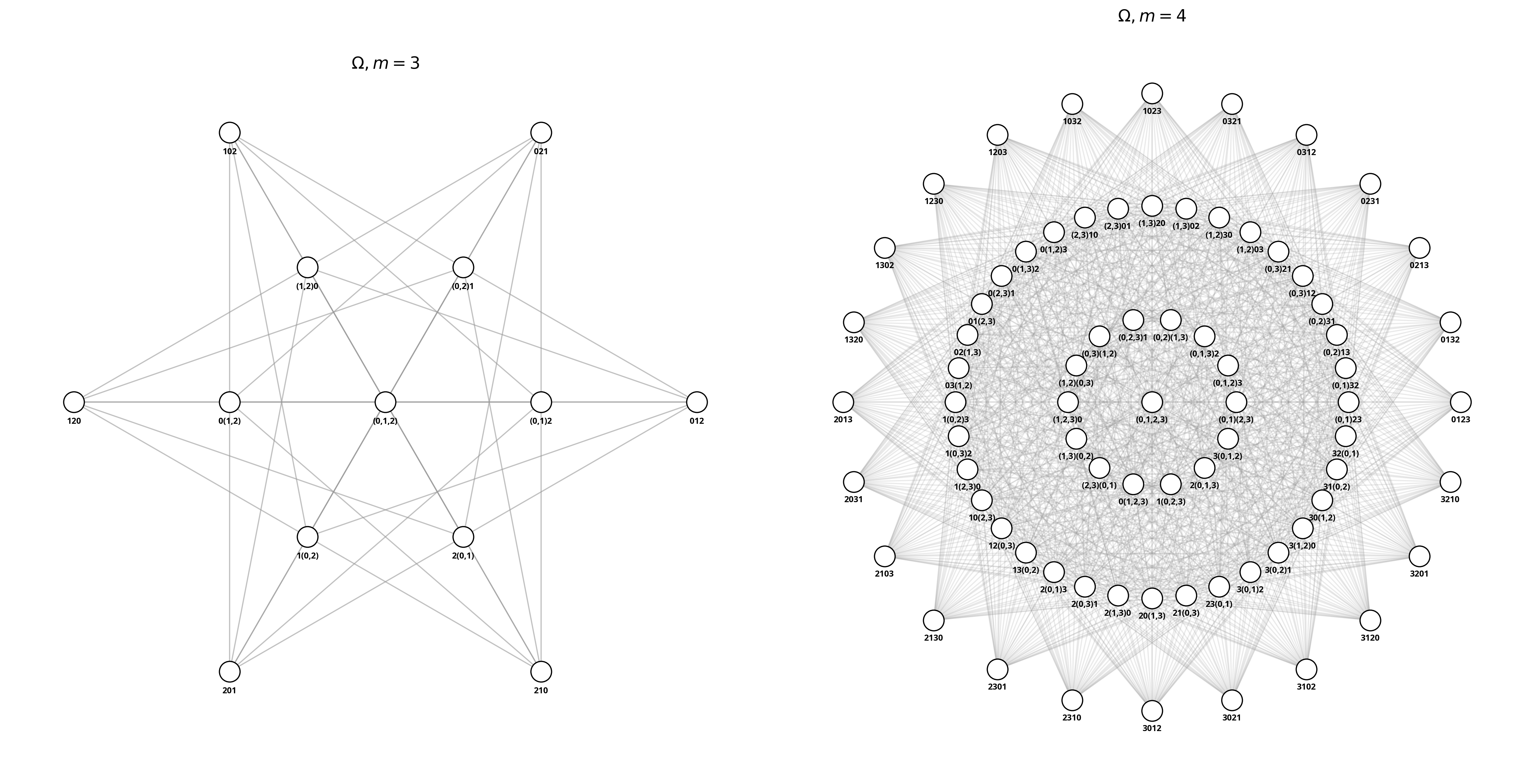}
\caption{Hasse diagrams (cover graphs) of the weak-order lattices $\Omega(3)$
and $\Omega(4)$. For total preorders (ties allowed), $|\Omega(3)|=F(3)=13$ and
$|\Omega(4)|=F(4)=75$, where $F(m)$ is the ordered Bell (Fubini) number.}
\label{fig:omega-lattices-m3m4}
\figrule
\end{figure}

\begin{rem}[The Kemeny/Kendall distance]
For interpretive purposes, one may equip $\Omega$ with a Kemeny/Kendall metric
(based on pairwise disagreements) \citep{kendall1938,kemenysnell1962,bossertstorcken1992}.
For instance, when $m=3$ the fully indifferent weak order $(xyz)$ disagrees with
any strict ranking on all three pairwise comparisons, hence its Kemeny distance
to any strict order equals $3$.
This metric geometry does not by itself prescribe the update rules to be
introduced in Section~\ref{sec:update_rules}.
\end{rem}

\subsection{Borda scores and local profile aggregation}
\label{sec:borda-aggregation}
A Borda score map $b:\Omega\to\R^m$ assigns each weak order a score vector.
We use the averaged Borda convention for ties: if an indifference class occupies
consecutive rank positions, each alternative receives the average of the
corresponding strict Borda scores \citep{borda1781,saari1995,terzopoulouendriss2021}.
Given a score vector $s\in\R^m$, let $T(s)\in\Omega$ be the weak order obtained
by sorting alternatives by decreasing score, with ties in $s$ mapped to
indifference classes.
Thus $T$ is constant on polyhedral cones and discontinuous across tie hyperplanes.

A \emph{profile} is an assignment $\sigma=(\sigma_1,\dots,\sigma_n)\in \Omega^{\V}$
of a weak order to each node.
Given a profile $\sigma$ and a weight matrix $W$ (to be defined in
Section~\ref{sec:social-graph}), node $i$ forms the aggregated Borda score
\begin{equation}
\label{eq:agg_scores}
s_i(\sigma)=\sum_{j\in \V} w_{ij}\, b(\sigma_j)\in\R^m,
\end{equation}
and sets its (deterministic) Borda target to
\begin{equation}
\label{eq:target}
\tau_i(\sigma)=T\bigl(s_i(\sigma)\bigr)\in \Omega.
\end{equation}

\subsection{Social graph and weights}
\label{sec:social-graph}
Let $G=(\V,\E)$ be a finite directed graph with node set $\V=\{1,\dots,n\}$.
Each node $i\in\V$ carries a \emph{preference state} $\sigma_i\in\Omega$,
so the nodes of the social graph $G$ are in bijection with the nodes of the
preference state space $\Omega^{\V}$: the social network specifies \emph{who
influences whom}, while the preference space $\Omega$ specifies \emph{what
each node holds}.
A nonnegative weight matrix $W=(w_{ij})$ is \emph{compatible} with $G$ if
$w_{ij}>0$ only when $(i,j)\in\E$.
Throughout we assume \emph{row-stochasticity}:
\begin{equation}
\label{eq:row_stochastic}
\sum_{j\in \V} w_{ij}=1,\qquad i\in\V,
\end{equation}
so each node forms a random-walk-type convex combination of its in-neighbors'
Borda scores.
The undirected/unweighted case is recovered by taking $G$ undirected and setting
$W=D^{-1}A$, the random-walk matrix of $G$.

In a directed graph, influence flows along incoming edges.
For each node $i$, its \emph{in-neighborhood} is
\[
N_i^- := \{\, j\in V : w_{ij} > 0 \,\},
\]
so the local aggregation \eqref{eq:agg_scores} uses only the states of
in-neighbors: $s_i(\sigma) = \sum_{j\in N_i^-} w_{ij}\, b(\sigma_j)$.

\subsection{Persistent nodes}
\label{sec:persistent-nodes}

We study a discrete-time dynamical system in which each node $i\in V$ holds a
preference state $\sigma_i(t)\in\Omega$ at time $t\in\{0,1,2,\dots\}$ and
updates it at each step according to an interaction rule depending on the
states of its in-neighbors in $G$.

Before specifying the update rules, we introduce a structural partition of the
node set $V$ that is fixed throughout the dynamics: some nodes are
\emph{persistent} (their states are externally fixed for all time) and the
remaining nodes are \emph{nonpersistent} (they update according to the
interaction rule).

\begin{defn}[Persistent (boundary) nodes]
\label{defn:persistent}
Fix a set $P\subseteq V$ and an assignment
$\bar\sigma_P=(\bar\sigma_p)_{p\in P}\in \Omega(m)^P$.
Nodes in $P$ are \emph{persistent} (or \emph{boundary}) if their states are
kept constant and pinned for all times:
\[
\sigma_p(t)=\bar\sigma_p, \mbox{ for all }p\in P\mbox{ and all }t\ge 0.
\]
The remaining nodes $U:=V\setminus P$ are \emph{nonpersistent} (or \emph{free})
and evolve by the update rule.
\end{defn}

The partition $V=P\sqcup U$ is a structural property of the model, \emph{fixed before
any dynamics are specified}.
When $P=\varnothing$ all nodes are free; when $P\neq\varnothing$ the persistent
nodes act as exogenous anchors representing, for example, stubborn actors,
institutional positions, or media sources.

\subsection{Update interaction rules}
\label{sec:update_rules}
The most natural update rule is for each nonpersistent node to jump directly to
its local Borda target. This defines the \emph{unbounded} (or \emph{direct-jump})
step and serves as the natural benchmark against which the bounded step is compared.

\begin{defn}[Unbounded step / direct-jump map]
\label{defn:unbounded_step}
For a nonpersistent node $i\in U$, given its current state $\alpha\in\Omega$ and
its local Borda target $\beta=\tau_i(\sigma)\in\Omega$, define
\[
  \textsf{Jump}(\alpha,\beta) \;:=\; \beta.
\]
That is, the nonpersistent node moves directly to its local Borda target in a
single step, regardless of the distance $d_H(\alpha,\beta)$.
\end{defn}

In many social, political, and institutional settings, such unconstrained jumps
are implausible: agents revise their preferences incrementally, reflecting
informational, cognitive, or institutional constraints.
This motivates restricting each update to a single elementary move in $H$.

\begin{defn}[Bounded step]
\label{defn:bounded_step}
For a nonpersistent node $i\in U$, given its current state $\alpha\in\Omega$ and
its local Borda target $\beta=\tau_i(\sigma)\in\Omega$, define $\textsf{Step}(\alpha,\beta)$
as follows: if $\alpha=\beta$ return $\alpha$; otherwise move one step along a
shortest path in $H$ from $\alpha$ toward $\beta$.
To keep the dynamics deterministic, we assume a fixed tie-breaking rule when
multiple shortest-path neighbors exist (and we allow ``no-move'' clauses in
special ambiguous cases, as in certain implementation conventions).
\end{defn}

The bounded-step rule admits a natural interpretation that is both behavioral and
technical. From a behavioral perspective, it models incremental preference
revision: rather than instantaneously adopting a target ranking, agents adjust
their preferences through elementary changes in the space of orderings, consistent
with classical treatments of preference geometry
\citep{kemenysnell1962,saari1995} and models of bounded rationality
\citep{simon1957}.
From a technical perspective, boundedness restricts the admissible transitions in
$H$, ensuring that updates proceed through local moves.
This locality aligns with standard models of iterative opinion dynamics on
networks \citep{degroot1974,friedkin1990}.
The role of the bounded-step constraint relative to the unbounded benchmark is
examined throughout the paper, beginning in Section~\ref{sec:fixed_points}.

\subsection{Update variants}
\label{sec:variants}
Each step rule ($\textsf{Jump}$ or $\textsf{Step}$) combines with synchronous or
asynchronous scheduling, yielding four variants.
The two bounded variants are the primary objects of study; the two unbounded
variants serve as a benchmark for comparison.

\begin{defn}[Variant S: synchronous bounded Borda dynamics]
\label{defn:variantS}
At each time $t$, every nonpersistent node $i\in U$ updates simultaneously using
$\textsf{Step}$:
\[
\sigma_i(t+1)=\textsf{Step}\bigl(\sigma_i(t),\,\tau_i(\sigma(t))\bigr),
\qquad
\sigma_p(t+1)=\sigma_p(t)\mbox{ for }p\in P.
\]
We denote the induced map on $\Omega^U$ by $F_P$.
\end{defn}

\begin{defn}[Variant A: asynchronous bounded Borda dynamics]
\label{defn:variantA}
Fix an update schedule that selects a single nonpersistent node $I_t\in U$ at
each time $t$ (e.g., uniformly at random).
Only that node updates using $\textsf{Step}$:
\[
\sigma_{I_t}(t+1)=\textsf{Step}\bigl(\sigma_{I_t}(t),\,\tau_{I_t}(\sigma(t))\bigr),
\qquad
\sigma_{i}(t+1)=\sigma_i(t)\ \mbox{for }i\neq I_t.
\]
\end{defn}

\begin{defn}[Variant S$^\infty$: synchronous unbounded Borda dynamics]
\label{defn:variantSinf}
At each time $t$, every nonpersistent node $i\in U$ updates simultaneously using
$\textsf{Jump}$:
\[
\sigma_i(t+1)=\textsf{Jump}\bigl(\sigma_i(t),\,\tau_i(\sigma(t))\bigr)
=\tau_i(\sigma(t)),
\qquad
\sigma_p(t+1)=\sigma_p(t)\mbox{ for }p\in P.
\]
We denote the induced map on $\Omega^U$ by $F_P^{\infty}$.
\end{defn}

\begin{defn}[Variant A$^\infty$: asynchronous unbounded Borda dynamics]
\label{defn:variantAinf}
Fix an update schedule that selects a single nonpersistent node $I_t\in U$ at
each time $t$.
Only that node updates using $\textsf{Jump}$:
\[
\sigma_{I_t}(t+1)=\textsf{Jump}\bigl(\sigma_{I_t}(t),\,\tau_{I_t}(\sigma(t))\bigr)
=\tau_{I_t}(\sigma(t)),
\qquad
\sigma_{i}(t+1)=\sigma_i(t)\ \mbox{for }i\neq I_t.
\]
\end{defn}

\begin{rem}[Bounded versus unbounded dynamics]
\label{rem:bounded_vs_unbounded}
All four variants share the same aggregation step: the local Borda target
$\tau_i(\sigma)$ is computed identically via \eqref{eq:agg_scores}--\eqref{eq:target}
in every case, using both nonpersistent and persistent neighbours' current states.
The variants differ only in how far a nonpersistent node moves toward its target
in one round, and in whether all nonpersistent nodes update simultaneously or one
at a time.
The fixed-point sets of $F_P$ and $F_P^{\infty}$ coincide: a profile $\sigma$ is
a fixed point of either map if and only if $\tau_i(\sigma)=\sigma_i$ for all
$i\in U$.
The two maps produce the same eventual periodic orbits under the constructions
of Theorems~\ref{thm:self_traveling_wave} and~\ref{thm:period2_forcing},
as established in Proposition~\ref{prop:unbounded_fixed_periodic}.
The transient dynamics can differ: under Variants~S$^\infty$ and~A$^\infty$
the influence of persistent nodes is felt instantaneously rather than
incrementally.
Variants~S$^\infty$ and~A$^\infty$ are used as a benchmark in
Sections~\ref{sec:self_osc} and~\ref{sec:forced_osc}.
\end{rem}

\section{Fixed points and eventual periodicity} 
\label{sec:fixed_points_section}

\subsection{Eventual periodicity and fixed points}
\label{sec:fixed_points}

For a fixed persistent boundary condition and deterministic tie-breaking in $\textsf{Step}$, Variant~S defines a deterministic map
\(
F_P:\Omega^U\to\Omega^U.
\)

\begin{prop}[Eventual periodicity]
\label{prop:eventual_periodicity}
Every trajectory of Variant~S (with fixed $P$ and deterministic $\textsf{Step}$) is eventually periodic: there exist $\mu\ge 0$ and $p\ge 1$ such that $\sigma(t+p)=\sigma(t)$ for all $t\ge \mu$.
\end{prop}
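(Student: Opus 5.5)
The proof will be the standard pigeonhole argument for deterministic self-maps of finite sets. First, we check that Variant~S with the persistent boundary fixed and a deterministic tie-breaking rule in $\textsf{Step}$ defines a genuine single-valued map $F_P:\Omega^U\to\Omega^U$. The target $\tau_i(\sigma)=T(s_i(\sigma))$ is a deterministic function of the full profile, since $s_i$ is given by \eqref{eq:agg_scores} with fixed $W$. The persistent coordinates are the constant $\bar\sigma_P$, so each free coordinate of the next state is determined by the current free coordinates alone. The bounded step $\textsf{Step}(\alpha,\beta)$ is single-valued once the tie-breaking convention, including any ``no-move'' clause, is fixed. Hence $\sigma_U(t)=F_P^{\,t}(\sigma_U(0))$, and the full profile at time $t$ is $(\sigma_U(t),\bar\sigma_P)$.

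Second, the state space $\Omega^U$ is finite, with cardinality $F(m)^{|U|}$ by \eqref{eq:fubini}. Among the $F(m)^{|U|}+1$ states $\sigma_U(0),\dots,\sigma_U(F(m)^{|U|})$, two must coincide. So there exist indices $0\le \mu<\mu+p\le F(m)^{|U|}$ with $\sigma_U(\mu)=\sigma_U(\mu+p)$ and $p\ge1$.

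Third, we propagate this coincidence forward. Since $\sigma_U(t+1)=F_P(\sigma_U(t))$, induction on $t\ge\mu$ gives $\sigma_U(t+p)=F_P^{\,t-\mu}(\sigma_U(\mu+p))=F_P^{\,t-\mu}(\sigma_U(\mu))=\sigma_U(t)$. The persistent coordinates are constant, so $\sigma(t+p)=\sigma(t)$ for all $t\ge\mu$. Choosing the first repetition gives the minimal preperiod and period, with the bounds $\mu+p\le F(m)^{|U|}$ as a byproduct.

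The only point needing care is the first step: making sure the dynamics really is autonomous and deterministic. This requires that the tie-breaking rule depends only on $(\alpha,\beta)$, or at most on the current profile, and not on time or randomness. If it did depend on time or randomness, the argument would fail, as it does for Variant~A with a random schedule. We will state this assumption explicitly; otherwise the argument is routine.
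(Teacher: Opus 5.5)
Your proposal is correct and follows the same route as the paper: $F_P$ is a deterministic self-map of the finite set $\Omega^U$, so every orbit enters a cycle after a finite transient. You simply make explicit the pigeonhole step, the forward propagation of the repetition, and the bound $\mu+p\le F(m)^{|U|}$, all of which the paper's one-line proof leaves implicit.
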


\begin{proof}
$F_P$ is a deterministic map on the finite set $\Omega^U$, so every orbit enters a directed cycle after a finite transient.
\end{proof}

\begin{prop}[Fixed points and periodic orbits of the unbounded map]
\label{prop:unbounded_fixed_periodic}
Consider Variant~S$^\infty$ (Definition~\ref{defn:variantSinf}) with fixed
persistent boundary data $(P,\bar\sigma_P)$ and deterministic tie-breaking.

\begin{description}
  \item[(a) \textbf{Eventual periodicity}:] 
  Every trajectory of Variant~S$^\infty$ is eventually periodic.

  \item[(b) \textbf{Fixed points}:] 
  The fixed-point sets of $F_P$ and $F_P^{\infty}$ coincide:
  a profile $\sigma\in\Omega^V$ is a fixed point of $F_P^{\infty}$ if and only if
  $\tau_i(\sigma)=\sigma_i$ for all $i\in U$.
  In particular, every fixed point of Variant~S is a fixed point of
  Variant~S$^\infty$, and vice versa.

  \item[(c) \textbf{Self-oscillations persist}:] 
  Under the hypotheses of Theorem~\ref{thm:self_traveling_wave}
(directed $\ell$-cycle $C$ in $G$, $k$-cycle in $H$ with $k\mid\ell$,
targeting condition $\tau_i(\sigma)=\sigma_{i-1}$, and initialisation
$\sigma_i(0)=\omega_{i\bmod k}$), the restriction $(\sigma_i(t))_{i\in C}$ is periodic with period $k$ under Variant~S$^\infty$ as well. That is, the self-oscillation of Theorem~\ref{thm:self_traveling_wave} survives unchanged when the bounded step is replaced by the direct-jump map.

  \item[(d) \textbf{Forced oscillations persist}:]
  Under the hypotheses of Theorem~\ref{thm:period2_forcing},
  Variant~S$^\infty$ has the period--$2$ orbit
  \[
  (\rho,\rho^\dagger)\longleftrightarrow(\rho^\dagger,\rho).
  \]
  Along this Variant~S$^\infty$ orbit, each free node's aggregate score
  has $\ell^\infty$-distance at least $(1-2\varepsilon)/2>0$ from the
  tie hyperplane.
\end{description}
\end{prop}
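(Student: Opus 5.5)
Parts (a) and (b) are structural and I would dispose of them first. For (a), the argument of Proposition~\ref{prop:eventual_periodicity} applies verbatim: with $(P,\bar\sigma_P)$ fixed, $F_P^{\infty}$ is a deterministic self-map of the finite set $\Omega^U$, so every orbit enters a cycle after at most $|\Omega^U|$ steps. For (b), I would unwind the definitions. A profile is fixed by $F_P^\infty$ iff $\tau_i(\sigma)=\sigma_i$ for every $i\in U$, because $\textsf{Jump}(\alpha,\beta)=\beta$. It is fixed by $F_P$ iff $\textsf{Step}(\sigma_i,\tau_i(\sigma))=\sigma_i$ for all $i\in U$. By Definition~\ref{defn:bounded_step}, $\textsf{Step}(\alpha,\beta)=\alpha$ when $\alpha=\beta$. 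When $\alpha\neq\beta$, $\textsf{Step}$ moves to an $H$-neighbour strictly closer to $\beta$, and hence differs from $\alpha$. The one delicate point is the ``no-move'' clause. I would state explicitly that fixed points are understood with $\textsf{Step}(\alpha,\beta)\neq\alpha$ whenever $\alpha\ne\beta$, i.e.\ that no-move clauses never fire away from the target, since otherwise the equivalence can fail. With that convention both fixed-point sets equal $\{\sigma:\tau_i(\sigma)=\sigma_i\ \forall i\in U\}$.

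For (c), I would reuse the invariant from Theorem~\ref{thm:self_traveling_wave}. The hypotheses of that theorem are a $k$-cycle $\omega_0,\dots,\omega_{k-1}$ in $H$ with $k\mid\ell$ and consecutive $\omega$'s adjacent, the targeting condition $\tau_i(\sigma)=\sigma_{i-1}$, and the initialisation $\sigma_i(0)=\omega_{i\bmod k}$. Under $\textsf{Jump}$, each node simply copies its predecessor: $\sigma_i(t+1)=\sigma_{i-1}(t)$. I would prove by induction on $t$ that $\sigma_i(t)=\omega_{(i-t)\bmod k}$, indices taken along $C$. The assumption $k\mid\ell$ makes the labelling consistent around the cycle, since node $\ell\equiv 0$ has predecessor $\ell-1\equiv k-1 \pmod k$. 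The induction step needs the targeting condition to hold along the orbit, not only at time $0$. I would invoke it in the form established in Theorem~\ref{thm:self_traveling_wave}, which guarantees it for every profile of the rotating-wave family; the Jump orbit visits exactly these profiles, because the Step orbit there also moves each node one $H$-edge to $\omega_{j-1}$, which is precisely the target. The Step and Jump orbits therefore coincide, and periodicity with period $k$ follows. Establishing that the targeting condition is available along the whole orbit is the step I expect to need care: I must check that the earlier theorem's hypothesis is phrased for all states of the wave, not just the initial one.

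For (d), I would again compare with the bounded orbit of Theorem~\ref{thm:period2_forcing}. That theorem produces the alternation $(\rho,\rho^\dagger)\leftrightarrow(\rho^\dagger,\rho)$ on the two sides of a bipartite free part under contrarian camps. There, each free node's target at one time equals the state it occupies next. Under $\textsf{Jump}$, the node moves to its target directly. So the key input is that along the orbit each target equals $\rho$ or $\rho^\dagger$, and that the Step move from $\rho^\dagger$ lands exactly on $\rho$ (the pair is $H$-adjacent, or the theorem's construction makes the target the next state). Otherwise I would simply recompute the scores. In the profile $(\rho,\rho^\dagger)$, a node on side $A$ receives weight at least $1-\varepsilon$ from nodes in state $\rho^\dagger$ (opposite side plus its camp) and at most $\varepsilon$ from the rest. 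Its score is then $s=(1-\varepsilon)b(\rho^\dagger)+\varepsilon\,(\text{convex combination of Borda vectors})$. Consecutive coordinate gaps in $b(\rho^\dagger)$ are at least $1/2$ under averaged Borda, and perturbations have coordinate differences bounded by one unit. Combining these, I would bound each relevant coordinate gap below by $(1-\varepsilon)\cdot\tfrac12-\varepsilon\cdot\tfrac12=(1-2\varepsilon)/2$, which gives $T(s)=\rho^\dagger$ with the stated margin from every tie hyperplane. The symmetric computation handles side $B$ and the other phase, closing the period-2 orbit.
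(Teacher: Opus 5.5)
Your parts (a)--(c) follow the paper's proof. In (b) your caveat is actually needed. The paper argues only from $\textsf{Step}(\alpha,\alpha)=\alpha$, which shows that profiles with $\tau_i(\sigma)=\sigma_i$ are fixed by $F_P$. The reverse inclusion (the ``vice versa'') requires exactly your convention that no-move clauses never fire when $\alpha\neq\beta$. In (c) the worry you flag is moot: Theorem~\ref{thm:self_traveling_wave} assumes $\tau_i(\sigma)=\sigma_{i-1}$ for \emph{all} profiles. The paper simply runs the induction $\sigma_i(t+1)=\tau_i(\sigma(t))=\sigma_{i-1}(t)$ under $\textsf{Jump}$, with no detour through the $\textsf{Step}$ orbit and no use of $H$-adjacency.

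The problem is (d). Theorem~\ref{thm:period2_forcing} is not a general bipartite statement. It is an $m=2$ gadget with two free nodes $i,j$, weights $w_{ij}=w_{ji}=1-\varepsilon$ and $w_{ip}=w_{jq}=\varepsilon$, and camps $\bar\sigma_p=\rho$, $\bar\sigma_q=\rho^\dagger$. Its Variant~S orbit is $(\rho,\rho^\dagger)\leftrightarrow(\eta,\eta)$, not $(\rho,\rho^\dagger)\leftrightarrow(\rho^\dagger,\rho)$, because $d_H(\rho,\rho^\dagger)=2$ via $\eta=(a_1\sim a_2)$.

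So your primary route fails. You planned to transfer the bounded orbit to $\textsf{Jump}$ on the grounds that targets equal next states, but unlike in (c) the bounded and unbounded orbits genuinely differ here. At $(\rho,\rho^\dagger)$, node $i$'s target $\rho^\dagger$ is not its next $\textsf{Step}$ state, which is $\eta$.

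Only your fallback works, and it is precisely the computation in part~(a) of Theorem~\ref{thm:period2_forcing}, which the paper's proof of (d) simply cites. At $(\rho,\rho^\dagger)$ the scores are $s_i=(\varepsilon,1-\varepsilon)$ and $s_j=(1-\varepsilon,\varepsilon)$. At $(\rho^\dagger,\rho)$ they are $s_i=(1,0)$ and $s_j=(0,1)$. The gaps are $1-2\varepsilon$ and $1$, giving $\ell^\infty$-distance at least $(1-2\varepsilon)/2$ from $\{x_1=x_2\}$.

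Your fallback sketch also needs correcting in three places:
\begin{itemize}
\item Node $i$'s camp holds $\rho$, not $\rho^\dagger$, so in the first phase it pulls \emph{against} the flip. Your bound of weight at least $1-\varepsilon$ on the target state still holds, but only from $j$.
\item Your margin estimate is valid only for $m=2$. For larger $m$, an $\varepsilon$-weight of other Borda vectors can shift a coordinate difference by up to $\varepsilon(m-1)$. The unit gap of a strict order then yields only $(1-m\varepsilon)/2$.
\item Consecutive gaps of averaged Borda vectors are at least $1$, not $\tfrac12$; $\tfrac12$ is the $\ell^\infty$-distance that a unit gap buys.
\end{itemize}
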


\begin{proof}
\textbf{(a)} $F_P^{\infty}$ is a deterministic map on the finite set $\Omega^U$,
so every orbit is eventually periodic.

\textbf{(b)} By definition,
$F_P^{\infty}(\sigma)_i = \textsf{Jump}(\sigma_i,\tau_i(\sigma)) = \tau_i(\sigma)$.
Hence $F_P^{\infty}(\sigma)=\sigma$ if and only if $\tau_i(\sigma)=\sigma_i$ for all
$i\in U$, which is the same condition as for $F_P$
(since $\textsf{Step}(\alpha,\alpha)=\alpha$ for all $\alpha\in\Omega$).

\textbf{(c)} Under the hypotheses of Theorem~\ref{thm:self_traveling_wave},
the targeting condition gives $\tau_i(\sigma(t))=\sigma_{i-1}(t)$ for all $i\in C$.
Under Variant~S$^\infty$:
\[
  \sigma_i(t+1) = \textsf{Jump}\bigl(\sigma_i(t),\,\tau_i(\sigma(t))\bigr)
  = \tau_i(\sigma(t)) = \sigma_{i-1}(t).
\]
By induction, $\sigma_i(t)=\omega_{(i-t)\bmod k}$ for all $i\in C$ and all $t\ge 0$.
Since the $\omega_j$ are distinct, the profile $(\sigma_i(t))_{i\in C}$ first returns
to its initial value when $t\equiv 0\pmod{k}$, so the period is exactly $k$.

We note that the proof of Theorem~\ref{thm:self_traveling_wave} under Variant~S
uses the condition $\textsf{Step}(\omega_r,\omega_{r-1})=\omega_{r-1}$, which holds
because adjacent states on the $H$-cycle are at distance $1$ in $H$.
Under Variant~S$^\infty$, the same conclusion holds without this condition:
$\textsf{Jump}(\omega_r,\omega_{r-1})=\omega_{r-1}$ for any $\omega_r,\omega_{r-1}\in\Omega$,
whether or not they are adjacent in $H$.
Thus Variant~S$^\infty$ requires \emph{fewer} hypotheses than Variant~S to produce
the same periodic orbit.

\textbf{(d)} Part~(a) of Theorem~\ref{thm:period2_forcing} gives the
stated Variant~S$^\infty$ orbit and its Variant~S$^\infty$-specific
margin from the tie hyperplane.

\end{proof}

\begin{rem}[Oscillations are robust to the step rule]
\label{rem:oscillations_robust}
Proposition~\ref{prop:unbounded_fixed_periodic} shows that the oscillatory
phenomena established in Theorems~\ref{thm:self_traveling_wave}
and~\ref{thm:period2_forcing} are \emph{not} artifacts of the bounded-step
rule: they persist, with the same periods, when the bound is removed entirely.
The driving mechanisms are the directed-cycle and bipartite topology of the
influence network $G$, the cycle and antipodal structure of the preference
graph $H$, and (for forced oscillations) the contrarian boundary conditions.
The bounded-step rule is a behaviorally motivated modeling choice describing
incremental preference revision; it neither creates nor destroys the oscillatory
behavior identified in this paper.

We also note a terminological distinction: our use of ``bounded'' refers to a
constraint on \emph{how far} a nonpersistent node moves in one update step,
not on \emph{which nodes interact}.
This differs from the bounded-confidence models of opinion dynamics
\citep{deffuant2000,hegselmann2002}, where ``bounded'' describes a threshold
on \emph{who influences whom}.
The two senses of boundedness are conceptually independent.
\end{rem}

\begin{prop}[Local Borda equilibria as fixed points]
\label{prop:local_equilibria_fixed}
Fix persistent boundary data $(P,\bar\sigma_P)$ and consider Variant~S.
If a profile $\sigma\in\Omega^\V$ satisfies 
\[
\tau_i(\sigma)=\sigma_i, \mbox{ for all } i\in U,
\]
then $\sigma_U$ is a fixed point of the induced map $F_P:\Omega^U\to\Omega^U$.
\end{prop}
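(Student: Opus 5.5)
The argument is a direct unwinding of Definition~\ref{defn:variantS}, component by component. Fix the persistent boundary data $(P,\bar\sigma_P)$ and a profile $\sigma\in\Omega^\V$ with $\sigma_p=\bar\sigma_p$ for $p\in P$ and $\tau_i(\sigma)=\sigma_i$ for every $i\in U$. We identify $\sigma$ with its free part $\sigma_U\in\Omega^U$, completed by the pinned values $\bar\sigma_P$. This is exactly how $F_P$ acts: its input is $\sigma_U$, and the targets $\tau_i$ are computed via \eqref{eq:agg_scores}--\eqref{eq:target} from the full profile, including the persistent coordinates. The goal is to show $F_P(\sigma_U)_i=\sigma_i$ for each $i\in U$.

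For each free node $i\in U$, Definition~\ref{defn:variantS} gives
\[
F_P(\sigma_U)_i=\textsf{Step}\bigl(\sigma_i,\tau_i(\sigma)\bigr)=\textsf{Step}(\sigma_i,\sigma_i).
\]
By the first clause of Definition~\ref{defn:bounded_step}, this equals $\sigma_i$: if $\alpha=\beta$, then $\textsf{Step}$ returns $\alpha$. This clause takes precedence over any tie-breaking convention or ``no-move'' clause, because those are invoked only when $\alpha\neq\beta$. Since the persistent coordinates are unchanged by definition, $F_P(\sigma_U)=\sigma_U$.

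Nothing here is a real obstacle. The only point needing care is bookkeeping. We must confirm that the target computed inside $F_P$ uses the same full profile $\sigma$ that appears in the hypothesis, with the persistent nodes at $\bar\sigma_P$, so that the hypothesis $\tau_i(\sigma)=\sigma_i$ applies verbatim. We must also confirm that the deterministic tie-breaking in $\textsf{Step}$ cannot move a node that already sits at its target. Alternatively, one could cite Proposition~\ref{prop:unbounded_fixed_periodic}(b), which already notes that $\textsf{Step}(\alpha,\alpha)=\alpha$ and that the fixed points of $F_P$ are characterized by $\tau_i(\sigma)=\sigma_i$ for $i\in U$. The direct argument above is self-contained, however, so I would present it.
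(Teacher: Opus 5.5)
Your proposal is correct and follows the paper's proof: at each free node, the definition of $\textsf{Step}$ gives $\textsf{Step}(\sigma_i,\sigma_i)=\sigma_i$, and persistent nodes do not update. You also make explicit that $\sigma_P=\bar\sigma_P$, so that the targets computed inside $F_P$ come from the same profile as in the hypothesis; the paper leaves this bookkeeping implicit, but it is genuinely needed for the statement to hold.
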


\begin{proof}
For each free node $i\in U$, $\sigma_i=\tau_i(\sigma)$ implies
$\textsf{Step}(\sigma_i,\tau_i(\sigma))=\sigma_i$ by the definition of $\textsf{Step}$.
Persistent nodes do not update. Hence $F_P(\sigma_U)=\sigma_U$.
\end{proof}

\begin{lem}[Frozen targets imply finite-time stabilization]
\label{lem:frozen_targets}
Assume that for some $t_0$ the targets become time-independent, i.e.,
there exist $\bar\tau_i\in\Omega$ such that for all $t\ge t_0$ and all $i\in U$,
\[
\tau_i(\sigma(t))=\bar\tau_i.
\]
Then for each $i\in U$, the distance $d_H(\sigma_i(t),\bar\tau_i)$ is nonincreasing in $t$
and reaches $0$ in at most $d_H(\sigma_i(t_0),\bar\tau_i)$ steps.
In particular, if all $\bar\tau_i$ coincide with a single $\bar\omega$, then Variant~S reaches
consensus $\sigma_i\equiv \bar\omega$ within at most $\mathrm{diam}(H)$ steps after $t_0$.
\end{lem}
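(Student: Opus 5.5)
The plan is to fix a free node $i\in U$ and study the scalar sequence $d_t:=d_H(\sigma_i(t),\bar\tau_i)$ for $t\ge t_0$. Under Variant~S, $\sigma_i(t+1)=\textsf{Step}(\sigma_i(t),\bar\tau_i)$ for every $t\ge t_0$, because the target is frozen. So the evolution of node $i$ decouples from the rest of the network after time $t_0$. It becomes the iteration of the single-argument map $\alpha\mapsto\textsf{Step}(\alpha,\bar\tau_i)$ on $H$. The whole argument therefore reduces to one property of $\textsf{Step}$ with a fixed second argument.

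The key step is the one-step contraction
\[
d_H\bigl(\textsf{Step}(\alpha,\beta),\beta\bigr)=d_H(\alpha,\beta)-1\quad\text{whenever }\alpha\neq\beta,
\]
and, trivially, $\textsf{Step}(\beta,\beta)=\beta$. The contraction follows from Definition~\ref{defn:bounded_step}. $\textsf{Step}$ moves to the successor of $\alpha$ on a shortest $\alpha$--$\beta$ path in $H$. Such a path exists because $H$ is connected. A vertex at position $1$ on a geodesic of length $d$ lies at distance exactly $d-1$ from the endpoint, since the remaining subpath is itself a geodesic.

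Given this, induction on $t\ge t_0$ yields $d_{t+1}=\max(d_t-1,0)$. Hence $d_t$ is nonincreasing and $d_t=0$ for all $t\ge t_0+d_{t_0}$. Once $d_t=0$ the node stays at $\bar\tau_i$, because $\textsf{Step}(\bar\tau_i,\bar\tau_i)=\bar\tau_i$.

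For the consensus clause, take all $\bar\tau_i=\bar\omega$. Each free node then reaches $\bar\omega$ within $d_H(\sigma_i(t_0),\bar\omega)\le\mathrm{diam}(H)$ steps and remains there. So after $\mathrm{diam}(H)$ steps past $t_0$ every free node holds $\bar\omega$. I will read ``consensus $\sigma_i\equiv\bar\omega$'' as referring to $i\in U$. Persistent nodes are pinned and need not equal $\bar\omega$ unless that is separately assumed.

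The main obstacle is the ``no-move'' clause permitted in Definition~\ref{defn:bounded_step} for ambiguous cases. If $\textsf{Step}$ may stay put when $\alpha\neq\beta$, the distance is still nonincreasing, but the strict decrease fails and the finite-time bound could break. I would handle this by stating that the lemma assumes $\textsf{Step}$ always takes a genuine shortest-path step whenever $\alpha\neq\beta$. The deterministic tie-breaking rule only selects which geodesic neighbor is chosen, so it never blocks progress. With that convention in force, the contraction identity holds and the rest is routine.
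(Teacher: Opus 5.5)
Your proposal is correct and follows essentially the same route as the paper: under a frozen target, the one-step identity $d_H(\sigma_i(t+1),\bar\tau_i)=\max\{d_H(\sigma_i(t),\bar\tau_i)-1,0\}$ holds, and the consensus clause follows by taking the maximum over $i$ and bounding by $\mathrm{diam}(H)$. The paper's proof uses tacitly the two assumptions you state explicitly: that the ``no-move'' clause of Definition~\ref{defn:bounded_step} is excluded, and that consensus refers only to the free nodes $U$. Stating them is a genuine improvement, not a change of approach.
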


\begin{proof}
By construction, $\textsf{Step}(\alpha,\beta)$ moves one step along a shortest path in $H$
from $\alpha$ toward $\beta$ (or stays put if $\alpha=\beta$).
Hence $d_H(\sigma_i(t+1),\bar\tau_i)=\max\{d_H(\sigma_i(t),\bar\tau_i)-1,0\}$.
The consensus claim follows by taking the maximum over $i$ and using $\mathrm{diam}(H)$.
\end{proof}

\subsection{Consensus fixed points} 

\begin{prop}[Consensus profiles as fixed points]
\label{prop:consensus_fixed_points} 
Assume $P=\varnothing$ and $W$ is row-stochastic.
For any $\omega\in\Omega$, the consensus profile $\sigma_i\equiv \omega$ (for all $i \in V$) is a fixed point of Variant~S.
\end{prop}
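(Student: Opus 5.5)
The plan is to verify the local equilibrium condition of Proposition~\ref{prop:local_equilibria_fixed} directly and then invoke that proposition. With $P=\varnothing$ we have $U=\V$, so it suffices to show that $\tau_i(\sigma)=\omega$ for every $i\in\V$ when $\sigma_j=\omega$ for all $j$.

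\textbf{Step 1: the aggregated score.} For each node $i$, the definition \eqref{eq:agg_scores} gives
\[
s_i(\sigma)=\sum_{j\in\V} w_{ij}\,b(\omega)=\Bigl(\sum_{j\in\V} w_{ij}\Bigr) b(\omega)=b(\omega),
\]
where the last equality uses row-stochasticity \eqref{eq:row_stochastic}. This is the only point where the hypothesis on $W$ enters. Without it the score would be a positive rescaling $c_i\,b(\omega)$ with $c_i>0$. Such a rescaling would still give the same weak order, but only if $N_i^-\neq\varnothing$, which row-stochasticity guarantees.

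\textbf{Step 2: the projection recovers $\omega$.} Next we show $T(b(\omega))=\omega$. Under the averaged Borda convention, alternatives in the same indifference class of $\omega$ receive the same score, namely the average of the strict Borda scores over the consecutive positions that class occupies. Alternatives in a strictly higher class receive a strictly larger score. This holds because every position in a higher class carries a strictly larger strict Borda score than every position in a lower class, so their averages are strictly ordered as well. Sorting by decreasing score with ties mapped to indifference classes therefore reproduces exactly the classes of $\omega$ in the same order. Hence $\tau_i(\sigma)=T(b(\omega))=\omega=\sigma_i$.

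\textbf{Step 3: conclusion.} Proposition~\ref{prop:local_equilibria_fixed} now gives that $\sigma$ is a fixed point of $F_P$, since $\textsf{Step}(\omega,\omega)=\omega$. The only step that needs some care is Step~2, the claim that $T\circ b$ is the identity on $\Omega$. I would write that argument out explicitly, because the strict separation between indifference classes under averaging is what guarantees that no spurious ties or reorderings occur. Everything else is immediate.
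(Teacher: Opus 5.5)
Your proof is correct and follows the paper's own argument. Row-stochasticity collapses $s_i(\sigma)$ to $b(\omega)$, the projection returns $T(b(\omega))=\omega$, and $\textsf{Step}(\omega,\omega)=\omega$; your explicit check that $T\circ b$ is the identity under the averaged Borda convention fills in a step the paper simply asserts.
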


\begin{proof}
If $\sigma_j=\omega$ for all $j$, then $s_i(\sigma)=\sum_j w_{ij} b(\omega)=b(\omega)$ by \eqref{eq:row_stochastic}.
Hence $\tau_i(\sigma)=T(b(\omega))=\omega$, and $\textsf{Step}(\omega,\omega)=\omega$ for all $i$.
\end{proof}

\begin{rem}[Multiplicity of fixed points (direction)]
Even without persistent (boundary) nodes, nonconsensus fixed points can arise, e.g., through ties in score space and the bounded-step restriction on $H$.
A natural problem is to characterize, in terms of graph topology and weights, when Variant~S admits only consensus equilibria versus additional equilibria.
In this manuscript we primarily develop sufficient conditions for the existence of nontrivial periodic orbits (self-oscillations and forced oscillations under contrarian boundary conditions).
\end{rem}

\subsection{Network dependence of convergence}

\begin{prop}[Network dependence of asymptotic convergence]
Under Variant S, asymptotic convergence of the dynamics depends
on the structure of the influence network $(G,W)$. In particular:

(i) If the trajectory enters a fixed point, then it converges.

(ii) If $(G,W)$ satisfies the conditions of Theorem~4.1, then
there exist initial conditions for which the dynamics do not
converge.

(iii) If $(G,W)$ satisfies the bipartite forcing conditions of
Theorem~5.2, then there exist initial conditions for which the
dynamics exhibit periodic orbits of even period and hence do not
converge.
\end{prop}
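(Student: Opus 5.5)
The proposition has three parts, and I would prove each directly from an earlier result. No new machinery is needed.

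\emph{Part (i).} Suppose $\sigma(t_0)=\sigma^*$ with $F_P(\sigma^*_U)=\sigma^*_U$. Because Variant~S is the deterministic map $F_P$ (Definition~\ref{defn:variantS}), induction on $t\ge t_0$ gives $\sigma(t+1)=F_P(\sigma(t))=\sigma^*$. The trajectory is therefore eventually constant, hence convergent. To identify which fixed points can arise, I would cite Proposition~\ref{prop:local_equilibria_fixed} and Proposition~\ref{prop:unbounded_fixed_periodic}(b): these are exactly the local Borda equilibria $\tau_i(\sigma)=\sigma_i$ for $i\in U$, and include the consensus profiles of Proposition~\ref{prop:consensus_fixed_points}.

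\emph{Part (ii).} I would invoke Theorem~\ref{thm:self_traveling_wave} (the Theorem~4.1 referred to in the statement). Its hypotheses are a directed $\ell$-cycle $C$ in $G$, a $k$-cycle $\omega_0,\dots,\omega_{k-1}$ of distinct states in $H$ with $k\mid\ell$, and the targeting condition $\tau_i(\sigma)=\sigma_{i-1}$. Take the initialisation $\sigma_i(0)=\omega_{i\bmod k}$. Since consecutive $\omega_r$ are adjacent in $H$, we have $\textsf{Step}(\omega_r,\omega_{r-1})=\omega_{r-1}$. The same induction as in Proposition~\ref{prop:unbounded_fixed_periodic}(c) then gives $\sigma_i(t)=\omega_{(i-t)\bmod k}$ on $C$. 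A $k$-cycle in the simple graph $H$ has $k\ge 3$, so the restriction to $C$ has exact period $k\ge 2$. In particular $\sigma(t+1)\neq\sigma(t)$ for every $t$, so by part (i) the trajectory never enters a fixed point. In a finite discrete space, convergence means being eventually constant, so the dynamics do not converge.

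\emph{Part (iii).} I would use Theorem~\ref{thm:period2_forcing} (Theorem~5.2 in the statement). Under its bipartite forcing hypotheses, with contrarian camps pinned to antipodal $\rho,\rho^\dagger$, it supplies initial conditions whose Variant~S orbit is eventually the alternation $(\rho,\rho^\dagger)\leftrightarrow(\rho^\dagger,\rho)$ on the two free sides. Since $\rho\neq\rho^\dagger$, this orbit has exact period $2$, which is even. As in part (ii), a periodic orbit of exact period at least $2$ contains no fixed point, so the trajectory does not converge. More generally, if the theorem yields an even multiple period, the same argument applies, because any minimal period $p\ge2$ excludes convergence.

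\emph{Main obstacle.} I expect the delicate step to be part (iii). Proposition~\ref{prop:unbounded_fixed_periodic}(d) quotes Theorem~\ref{thm:period2_forcing}(a) only for Variant~S$^\infty$, so I must check that the theorem also delivers a period-$2$ orbit under the bounded Variant~S. My plan is to start each free node on its side's antipodal state. The margin $(1-2\varepsilon)/2$ from the tie hyperplane then ensures the targets alternate exactly, as in the S$^\infty$ orbit. I would then confirm that the bounded step reproduces this alternation under the theorem's hypotheses, which either treat $\rho,\rho^\dagger$ as $H$-adjacent or use its Variant~S clause. If only the S$^\infty$ version is available, I would fall back on Lemma~\ref{lem:frozen_targets}-type distance arguments to show that the trajectory never freezes.
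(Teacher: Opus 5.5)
Parts (i) and (ii) follow the paper's argument. A fixed point is absorbing, so entering one means eventual constancy. The wave $\sigma_i(t)=\omega_{(i-t)\bmod k}$ of Theorem~\ref{thm:self_traveling_wave} has exact period $k\ge 2$, so it is never eventually constant. Your strategy for (iii) is also the paper's: cite Theorem~\ref{thm:period2_forcing} and note that an orbit of exact period $2$ cannot converge.

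However, part (iii) as written rests on a false claim. The alternation $(\rho,\rho^\dagger)\leftrightarrow(\rho^\dagger,\rho)$ is the Variant~S$^\infty$ orbit of Theorem~\ref{thm:period2_forcing}(a), not a Variant~S orbit. For $m=2$ the unique geodesic in $H$ is $\rho\longrightarrow\eta\longrightarrow\rho^\dagger$ with $\eta=(a_1\sim a_2)$, so $d_H(\rho,\rho^\dagger)=2$. The targets at $(\rho,\rho^\dagger)$ are indeed $(\rho^\dagger,\rho)$, but one bounded step lands at $(\eta,\eta)$, not at $(\rho^\dagger,\rho)$. Several consequences follow:
\begin{itemize}
\item Your plan to ``confirm that the bounded step reproduces this alternation'' would fail.
\item The $H$-adjacency alternative you allow for is false.
\item The margin $(1-2\varepsilon)/2$ you invoke belongs to the S$^\infty$ orbit; along the bounded orbit the score gap at $(\eta,\eta)$ is only $\varepsilon$.
\item The fallback through Lemma~\ref{lem:frozen_targets} is the wrong tool, since the targets here alternate rather than freeze. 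If you want a local lemma, the relevant one is Lemma~\ref{lem:alt_targets}.
\end{itemize}

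The repair is to use the Variant~S clause directly. Theorem~\ref{thm:period2_forcing}(b) says that under Variant~S the state $(\rho,\rho^\dagger)$ lies on the period-$2$ orbit $(\rho,\rho^\dagger)\leftrightarrow(\eta,\eta)$. At $(\eta,\eta)$ the scores are $s_i=(\tfrac{1+\varepsilon}{2},\tfrac{1-\varepsilon}{2})$ and $s_j=(\tfrac{1-\varepsilon}{2},\tfrac{1+\varepsilon}{2})$, so the targets are $(\rho,\rho^\dagger)$ and one step returns there. Since $\eta\neq\rho$, this orbit has exact period $2$. Your argument then finishes (iii) exactly as in the paper: exact period at least $2$ excludes a fixed point on the orbit, hence excludes eventual constancy.
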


\begin{proof}
Part (i) follows from the definition of convergence.

Part (ii) follows from Theorem~4.1, which establishes the
existence of periodic orbits of period greater than one under
directed-cycle structures. Such trajectories are not eventually
constant and therefore do not converge.

Part (iii) follows from Theorem~5.2, which constructs periodic
orbits (of even period) under bipartite forcing configurations.
Again, these trajectories are not eventually constant and hence
do not converge.
\end{proof}

\section{Self-oscillations without persistence} 
\label{sec:self_osc}

Proposition~\ref{prop:eventual_periodicity} guarantees that any nonconvergent trajectory eventually cycles, but it does not explain \emph{why} cycles arise nor provide conditions ensuring a specific period.
We now record topology/weight/Step conditions that guarantee genuine oscillations ($p>1$) without any persistent nodes.

\subsection{Oscillations on directed cycles} 
The next theorem shows that if the social graph contains a directed cycle with essentially permutation-like influence, then periodic orbits exist whenever the move graph $H$ contains a cycle.

\begin{thm}[Self-oscillations via oscillations on a directed influence cycle] 
\label{thm:self_traveling_wave}
Assume $P=\varnothing$.  
Suppose the social graph contains a directed cycle $C=(1\to 2\to \cdots \to \ell \to 1)$ such that for every $i\in C$,
\[
w_{i,i-1}>0 \quad\mbox{(with $i-1$ understood mod $\ell$)}, \qquad w_{ij}=0 \mbox{ for } j\neq i-1,
\]
so each node copies the unique predecessor in the cycle at the \emph{score-aggregation} stage.
Assume moreover that for all profiles $\sigma$, the target at $i\in C$ satisfies
\(
\tau_i(\sigma)=\sigma_{i-1}.
\)
Let $(\omega_0,\omega_1,\dots,\omega_{k-1},\omega_0)$ be a simple cycle in the move graph $H$ (adjacent successive states).
Assume that $k\mid\ell$.
Initialize the nodes on $C$ by
\[
\sigma_i(0)=\omega_{i \bmod k}\qquad (i\in C),
\]
and choose arbitrary initial states off $C$.
If $\textsf{Step}(\omega_r,\omega_{r-1})=\omega_{r-1}$ for all $r$ (i.e., one bounded step reaches the predecessor on the $H$-cycle),
then under Variant~S the restriction $(\sigma_i(t))_{i\in C}$ is periodic with period $k$.
In particular, the dynamics admits a $k$-cycle (hence $k$-oscillations) with no persistent nodes.
\end{thm}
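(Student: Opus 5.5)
The plan is a direct induction on $t$ that identifies the state of every node of $C$ at every time in closed form. The claim to establish is
\[
\sigma_i(t)=\omega_{(i-t)\bmod k}\qquad\mbox{for all } i\in C,\ t\ge 0.
\]
The base case $t=0$ is exactly the prescribed initialisation $\sigma_i(0)=\omega_{i\bmod k}$. Here the hypothesis $k\mid\ell$ is used once: it makes the labelling consistent around the cycle. Node $1$ has predecessor $\ell$, and we need $\ell\bmod k=0$, so that $\sigma_\ell(0)=\omega_0$ is the $H$-predecessor of $\sigma_1(0)=\omega_1$. Equivalently, indices $i\in\mathbb{Z}/\ell$ reduce well-defined to $\mathbb{Z}/k$.

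For the inductive step, fix $t$ and assume the formula at time $t$. Take $i\in C$ and put $r=(i-t)\bmod k$.

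1. Since $P=\varnothing$, every node of $C$ is free and updates by $\textsf{Step}$.
2. By the targeting hypothesis, $\tau_i(\sigma(t))=\sigma_{i-1}(t)=\omega_{(i-1-t)\bmod k}=\omega_{r-1}$. The hypothesis holds for all profiles, so the arbitrary states off $C$ are irrelevant. This is also consistent with the weight structure, since $N_i^-=\{i-1\}$ gives $s_i=b(\sigma_{i-1})$.
3. Hence $\sigma_i(t+1)=\textsf{Step}(\omega_r,\omega_{r-1})=\omega_{r-1}$, using the assumption that one bounded step reaches the $H$-predecessor. That assumption holds because consecutive states on the simple $H$-cycle are adjacent, so $d_H=1$.

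This gives $\sigma_i(t+1)=\omega_{(i-(t+1))\bmod k}$ and closes the induction. Nodes off $C$ are never consulted by nodes in $C$, because $w_{ij}=0$ for $j\neq i-1$, so the restriction to $C$ evolves autonomously.

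Periodicity with period dividing $k$ is then immediate from the closed form, since $\omega_{(i-t-k)\bmod k}=\omega_{(i-t)\bmod k}$.

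The main obstacle is showing that the minimal period is exactly $k$ and not a proper divisor. Suppose $0<p<k$ and $\sigma_i(t+p)=\sigma_i(t)$ for all $i\in C$. Because the cycle is simple, the states $\omega_0,\dots,\omega_{k-1}$ are pairwise distinct, so $\omega_{(i-t-p)\bmod k}=\omega_{(i-t)\bmod k}$ forces $p\equiv 0\pmod k$, a contradiction. A single node already suffices for this argument.

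Finally, the profile restricted to $C$ returns to itself after $k$ steps. Since the restriction evolves autonomously, any extension to the whole of $V$ is eventually periodic by Proposition~\ref{prop:eventual_periodicity}, with period a multiple of $k$. This yields the claimed $k$-oscillation with no persistent nodes.
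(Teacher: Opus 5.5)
Your proposal is correct and follows essentially the same route as the paper. Both argue by induction on $t$ for the closed form $\sigma_i(t)=\omega_{(i-t)\bmod k}$, using the targeting condition and the one-step adjacency on the $H$-cycle, and then use distinctness of the $\omega_r$ to get exact period $k$; your extra remarks (where $k\mid\ell$ enters, that the restriction to $C$ evolves autonomously, and that any eventual period of the full system is a multiple of $k$) are sound elaborations of points the paper leaves implicit.
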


\begin{proof}
We prove by induction on $t$ that
\[
\sigma_i(t)=\omega_{(i-t)\bmod k}
\qquad\text{for every }i\in C.
\]
For $t=0$, this is the prescribed initialization. If the formula
holds at time $t$, then the targeting condition gives
\[
\tau_i(\sigma(t))=\sigma_{i-1}(t)
=\omega_{(i-1-t)\bmod k}
=\omega_{(i-(t+1))\bmod k}.
\]
Since successive states on the $H$-cycle are adjacent,
\[
\sigma_i(t+1)
=\textsf{Step}\bigl(
\omega_{(i-t)\bmod k},
\omega_{(i-(t+1))\bmod k}
\bigr)
=\omega_{(i-(t+1))\bmod k}.
\]
Thus the formula holds for all $t\ge0$. Since the $\omega_r$ are
distinct, the restriction $(\sigma_i(t))_{i\in C}$ has period $k$.
\end{proof}

\begin{rem}[Targeting condition: score-and-weight derivation]
\label{rem:targeting_condition}
The targeting condition $\tau_i(\sigma)=\sigma_{i-1}$ in
Theorem~\ref{thm:self_traveling_wave} is a consequence of the weight
structure, not an independent assumption.
Under the hypothesis $w_{i,i-1}>0$ and $w_{ij}=0$ for $j\neq i-1$,
row-stochasticity forces $w_{i,i-1}=1$, so the aggregated score is
\[
  s_i(\sigma)
  = \sum_{j\in V} w_{ij}\,b(\sigma_j)
  = 1\cdot b(\sigma_{i-1})
  = b(\sigma_{i-1}).
\]
Since $T\circ b=\mathrm{id}$ on $\Omega$ (the projection $T$ recovers
the ordering from its own Borda score vector), we get
$\tau_i(\sigma)=T(s_i(\sigma))=T(b(\sigma_{i-1}))=\sigma_{i-1}$
for every profile $\sigma$.
The targeting condition is therefore implied by the hypotheses of
Theorem~\ref{thm:self_traveling_wave}.

More generally, if $w_{i,i-1}<1$ (other neighbors have positive weight),
the targeting condition holds if and only if the perturbed score
$s_i(\sigma)=b(\sigma_{i-1})+\delta_i(\sigma)$, where
\[
  \delta_i(\sigma)
  := \sum_{j\neq i-1} w_{ij}\bigl(b(\sigma_j)-b(\sigma_{i-1})\bigr),
\]
lies in the same polyhedral cone of $T$ as $b(\sigma_{i-1})$.
A sufficient condition is
\[
  \lVert\delta_i(\sigma)\rVert_\infty
  \;<\;
  d_\infty\!\bigl(b(\sigma_{i-1}),\,\partial\mathcal{C}(\sigma_{i-1})\bigr),
\]
where $\mathcal{C}(\sigma_{i-1})$ is the polyhedral cone of $T$
containing $b(\sigma_{i-1})$ and $d_\infty$ is the $\ell^\infty$
distance to its boundary.
Since $\lVert b(\sigma_j)-b(\sigma_{i-1})\rVert_\infty\le m-1$ for all
$\sigma_j,\sigma_{i-1}\in\Omega$, a uniform sufficient condition is
\[
  (1-w_{i,i-1})(m-1)
  \;<\;
  d_\infty\!\bigl(b(\sigma_{i-1}),\,\partial\mathcal{C}(\sigma_{i-1})\bigr).
\]
This inequality is satisfied whenever the predecessor's weight
$w_{i,i-1}$ is sufficiently close to $1$ relative to the margin
of $\sigma_{i-1}$ from the nearest score-tie hyperplane.
\end{rem}

\begin{cor}[Existence of nontrivial oscillations for $m=3$]
\label{cor:m3_osc}
For $m=3$, the move graph $H$ in Figure~\ref{fig:omega-lattices-m3m4} contains a cycle of strict rankings (the perimeter of $\Omega$), hence Theorem~\ref{thm:self_traveling_wave} yields nontrivial periodic orbits whenever the social graph contains a directed influence cycle satisfying its weight/target hypotheses.
\end{cor}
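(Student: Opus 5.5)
The plan is to reduce the corollary to a direct application of Theorem~\ref{thm:self_traveling_wave}. I would first describe $H$ for $m=3$ explicitly, then extract a concrete simple cycle $(\omega_0,\dots,\omega_{k-1},\omega_0)$ from it, and finally check the theorem's remaining hypotheses: the targeting condition, the $\textsf{Step}$ condition, and $k\mid\ell$.

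\textbf{Step 1: structure of $H$ for $m=3$.} List the $13$ weak orders. There are $6$ strict rankings, $6$ two-class orders (three of the form $(ab)\succ c$ and three of the form $a\succ(bc)$), and the single full indifference $(xyz)$. Cover relations in the weak-order lattice correspond to merging or splitting two adjacent indifference classes. Each strict ranking $a\succ b\succ c$ is therefore adjacent exactly to $(ab)\succ c$ and to $a\succ(bc)$. Each two-class order is adjacent to the two strict rankings refining it and to $(xyz)$.

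Consequently the six strict rankings, interleaved with the six two-class orders, form a simple closed walk of length $12$, namely the perimeter in Figure~\ref{fig:omega-lattices-m3m4}:
\[
x\succ y\succ z,\ x\succ(yz),\ x\succ z\succ y,\ (xz)\succ y,\ z\succ x\succ y,\ \dots,\ (xy)\succ z,\ x\succ y\succ z .
\]
I would verify that successive entries are covers and that all $12$ entries are distinct. This gives $k=12$. The phrase ``cycle of strict rankings'' should be read as this perimeter cycle passing through all strict rankings; strict rankings are never adjacent to one another in $H$. I would also record that shorter cycles exist through the center, for example
\[
(xyz),\ (xy)\succ z,\ x\succ y\succ z,\ x\succ(yz),\ (xyz),
\]
with $k=4$. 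This weakens the divisibility requirement.

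\textbf{Step 2: the $\textsf{Step}$ hypothesis.} Consecutive states $\omega_r,\omega_{r-1}$ on the chosen cycle are adjacent in $H$, so $d_H(\omega_r,\omega_{r-1})=1$. The only shortest path from $\omega_r$ to $\omega_{r-1}$ is the edge itself, so any tie-breaking rule in Definition~\ref{defn:bounded_step} gives $\textsf{Step}(\omega_r,\omega_{r-1})=\omega_{r-1}$. No ambiguous ``no-move'' clause can arise, because the shortest-path neighbour is unique.

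\textbf{Step 3: the targeting condition.} By Remark~\ref{rem:targeting_condition}, the weight hypotheses together with row-stochasticity force $w_{i,i-1}=1$. Hence $\tau_i(\sigma)=T(b(\sigma_{i-1}))=\sigma_{i-1}$, using $T\circ b=\mathrm{id}$. For $m=3$ this identity can be checked directly: averaged Borda scores preserve strict order between classes and give equal scores within a class.

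\textbf{Step 4: conclusion and main obstacle.} The only remaining hypothesis is $k\mid\ell$, and I expect this to be the main point needing care. The corollary's phrase ``whenever the social graph contains a directed influence cycle satisfying its weight/target hypotheses'' should be read as including this divisibility. Concretely, one takes $k=12$ with $\ell$ a multiple of $12$, or the $4$-cycle with $\ell$ a multiple of $4$. With this reading, Theorem~\ref{thm:self_traveling_wave} applies with the initialization $\sigma_i(0)=\omega_{i\bmod k}$ and yields an orbit of period $k>1$. This is a nontrivial oscillation with $P=\varnothing$.
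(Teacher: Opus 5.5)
Your proposal is correct and takes the same approach as the paper, which gives only the one-line argument: exhibit the perimeter of $H$ for $m=3$ as a simple cycle and invoke Theorem~\ref{thm:self_traveling_wave}, with the $\textsf{Step}$ and targeting hypotheses holding exactly as in that theorem's proof and Remark~\ref{rem:targeting_condition}. Your two clarifications are accurate and make precise what the corollary leaves implicit. First, the perimeter is a $12$-cycle alternating strict and two-class orders, since strict rankings are never adjacent in $H$. Second, the divisibility $k\mid\ell$ must be read as part of the hypotheses (for instance $12\mid\ell$, or $4\mid\ell$ via a $4$-cycle through $(xyz)$).
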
   

Figure~\ref{fig:selfosc_cycle} displays a self oscillation as in Theorem~\ref{thm:self_traveling_wave}.

\begin{figure}[h!]
\centering
\includegraphics[width=.85\linewidth]{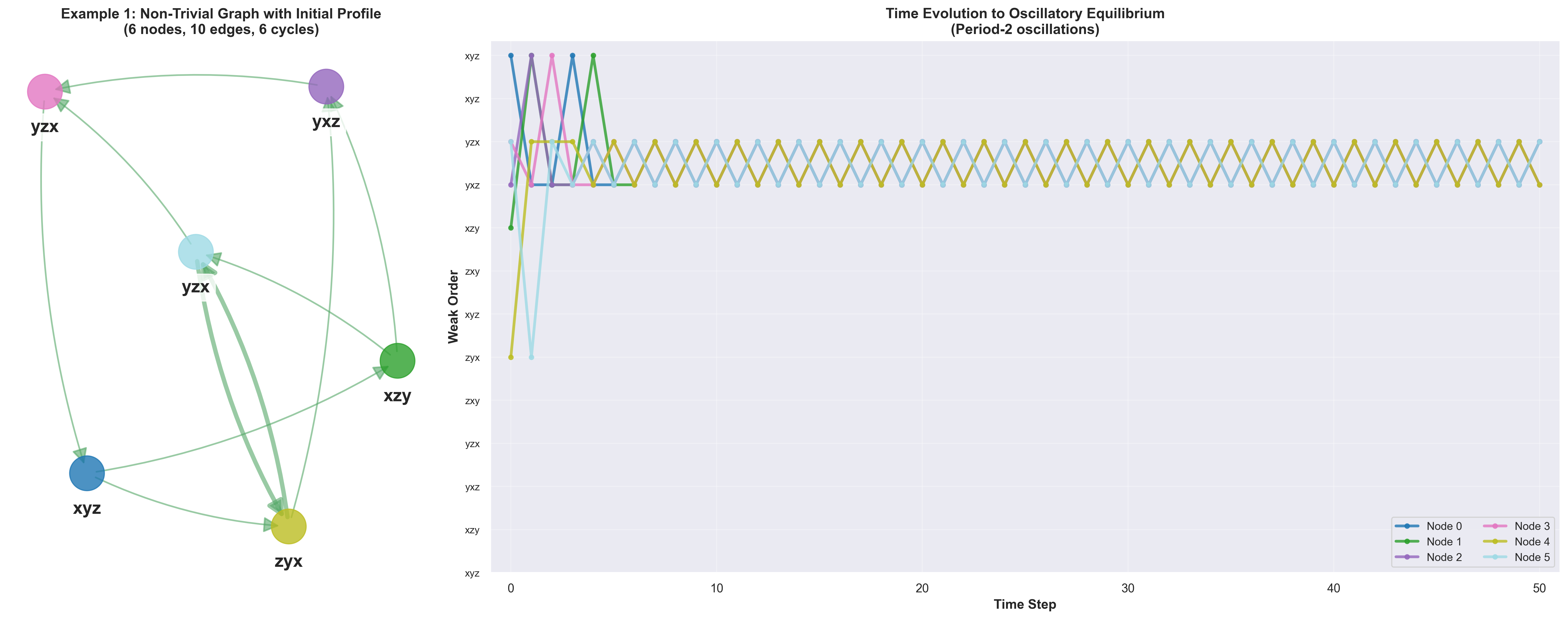}
\caption{A self-oscillation as in Theorem~\ref{thm:self_traveling_wave}.}
\label{fig:selfosc_cycle}
\end{figure}

\subsection{Even-period lifting} 
Bipartite social topology naturally induces a two-step description of synchronous dynamics and helps explain the prevalence of even-period cycles.

\begin{lem}[Two-step decoupling on bipartite graphs]
\label{lem:bipartite_decoupling}
Suppose $G$ is bipartite with $\V=A\sqcup B$ and influence is across the cut only, i.e.,
$w_{ij}=0$ whenever $i,j\in A$ or $i,j\in B$, 
where $A\sqcup B$ denotes a disjoint union (so $A\cap B=\varnothing$ and $A\cup B=\V$).
Write $\sigma=(\sigma_A,\sigma_B)$.
Then the synchronous map $F:\Omega^{\V}\to\Omega^{\V}$ can be written as
\[
F(\sigma_A,\sigma_B)=\bigl(F_A(\sigma_B),\,F_B(\sigma_A)\bigr)
\]
for suitable maps $F_A:\Omega^{B}\to\Omega^{A}$ and $F_B:\Omega^{A}\to\Omega^{B}$, and hence
\[
F^{2}(\sigma_A,\sigma_B)=\bigl(G_A(\sigma_A),\,G_B(\sigma_B)\bigr),
\qquad
G_A:=F_A\circ F_B,\ \ G_B:=F_B\circ F_A.
\]
\end{lem}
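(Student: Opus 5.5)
The plan is to unwind the definition of the synchronous map and check which coordinates each component actually depends on. I take $F$ to be the synchronous update of Variant~S with $P=\varnothing$, so $F_i(\sigma)=\textsf{Step}(\sigma_i,\tau_i(\sigma))$. I will also treat the Jump version, $F_i(\sigma)=\tau_i(\sigma)$, because the decoupling should hold for both.

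First I look at the target. For $i\in A$, the hypothesis $w_{ij}=0$ for $j\in A$ gives $s_i(\sigma)=\sum_{j\in B}w_{ij}\,b(\sigma_j)$. Hence $\tau_i(\sigma)=T(s_i(\sigma))$ depends only on $\sigma_B$, and symmetrically $\tau_i$ for $i\in B$ depends only on $\sigma_A$.

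For the Jump map this already yields the factorization. I would set $F_A(\sigma_B):=(\tau_i(\sigma))_{i\in A}$ and $F_B(\sigma_A):=(\tau_i(\sigma))_{i\in B}$, and these are well defined by the observation above.

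The bounded-step map is the main obstacle. Here $\textsf{Step}(\sigma_i,\tau_i)$ also depends on the node's own current state $\sigma_i$, with $i\in A$. So the new $A$-state is a function of $(\sigma_A,\sigma_B)$, not of $\sigma_B$ alone. The literal claim $F(\sigma_A,\sigma_B)=(F_A(\sigma_B),F_B(\sigma_A))$ therefore fails for Variant~S in general.

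I would handle this honestly in one of two ways. The first is to state and prove the lemma for the map in which each node's new state is a function of its target only. This covers Variant~S$^\infty$, and also Variant~S on any orbit where every $\textsf{Step}$ reaches its target in one move, that is, where $d_H(\sigma_i,\tau_i)\le1$, as on the orbits of Theorem~\ref{thm:period2_forcing}. The second is to allow $F_A:\Omega^A\times\Omega^B\to\Omega^A$ and record the weaker but true statement $F(\sigma_A,\sigma_B)=(F_A(\sigma_A,\sigma_B),F_B(\sigma_B,\sigma_A))$. In that version the two-step map no longer fully separates into $G_A(\sigma_A)$ and $G_B(\sigma_B)$.

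Once the one-step factorization is in hand, the two-step formula is immediate. Computing
\[
F^2(\sigma_A,\sigma_B)=F\bigl(F_A(\sigma_B),F_B(\sigma_A)\bigr)=\bigl(F_A(F_B(\sigma_A)),\,F_B(F_A(\sigma_B))\bigr)
\]
gives $G_A=F_A\circ F_B$ and $G_B=F_B\circ F_A$.
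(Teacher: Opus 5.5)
Your diagnosis is correct, and the gap it exposes is in the paper's proof, not in yours. The paper argues in one sentence that bipartiteness makes each node of $A$ depend only on states in $B$. As you show, that holds for the target $\tau_i(\sigma)=T\bigl(\sum_{j\in B}w_{ij}\,b(\sigma_j)\bigr)$. It does not hold for the Variant~S update $\textsf{Step}(\sigma_i,\tau_i(\sigma))$, which also reads the node's own state. For the direct-jump map (Variant~S$^\infty$) your argument coincides with the paper's and is complete.

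You should, however, back the word ``fails'' with an explicit instance. A formal dependence on $\sigma_A$ does not by itself rule out the existence of some $F_A(\sigma_B)$. Here is one:
\begin{itemize}
\item[] Take $m=2$ with $\rho=(a_1\succ a_2)$, $\eta=(a_1\sim a_2)$, $\rho^\dagger=(a_2\succ a_1)$, so $H$ is the path on these three states with middle vertex $\eta$.
\item[] Let $V=\{1,2\}$, $A=\{1\}$, $B=\{2\}$, $w_{12}=w_{21}=1$, $P=\varnothing$.
\item[] Since $T\circ b=\mathrm{id}$, we get $\tau_1(\sigma)=\sigma_2$ and $\tau_2(\sigma)=\sigma_1$.
\item[] Then $F(\rho,\rho^\dagger)=(\eta,\eta)$ while $F(\eta,\rho^\dagger)=(\rho^\dagger,\eta)$, so the $A$-coordinate is not a function of $\sigma_B$.
\item[] Also $F^2(\rho,\rho)=(\rho,\rho)$ while $F^2(\rho,\rho^\dagger)=F(\eta,\eta)=(\eta,\eta)$, so the two-step conclusion $F^2=(G_A(\sigma_A),G_B(\sigma_B))$ fails too, not just the one-step factorization.
\end{itemize}
The paper explicitly invokes this decoupling for Variant~S in its discussion. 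Theorem~\ref{thm:even_period_lift}, which is built on $F_A$, $F_B$ and $G_A$, inherits the same restriction to Variant~S$^\infty$.

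There are also two smaller corrections.
\begin{itemize}
\item[] Your aside that the orbits of Theorem~\ref{thm:period2_forcing} are ones on which every $\textsf{Step}$ reaches its target is wrong for the bounded orbit of part~(b). At $(\rho,\rho^\dagger)$ the targets are $(\rho^\dagger,\rho)$, at $H$-distance $2$, which is exactly why that orbit passes through $(\eta,\eta)$. That theorem also has $P\neq\varnothing$, unlike your setting.
\item[] Your option~(2) is vacuous as written, since every map on $\Omega^A\times\Omega^B$ has the form $(F_A(\sigma_A,\sigma_B),F_B(\sigma_B,\sigma_A))$. The informative version is coordinatewise: for $i\in A$, $F(\sigma)_i=\textsf{Step}(\sigma_i,\tau_i(\sigma))$ depends only on $\sigma_i$ and on $\sigma_j$ for $j\in N_i^-\subseteq B$. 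It is this diagonal self-dependence that breaks the parity decoupling.
\end{itemize}
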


\begin{proof}
Bipartiteness implies that each node in $A$ depends only on states in $B$ (and vice versa), so the one-step update on $A$ defines $F_A(\sigma_B)$ and similarly $F_B(\sigma_A)$.
Composing twice yields the stated decoupling for $F^2$.
\end{proof}

\begin{thm}[Even-period lifting on bipartite graphs]
\label{thm:even_period_lift}
Under the hypotheses of Lemma~\ref{lem:bipartite_decoupling}, if the composed map $G_A=F_A\circ F_B$ has a $k$-cycle on $\Omega^A$, then $F$ has a $(2k)$-cycle on $\Omega^{\V}$.
\end{thm}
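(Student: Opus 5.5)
The plan is to build an explicit periodic orbit of $F$ from the given $k$-cycle of $G_A$, using the decoupling of Lemma~\ref{lem:bipartite_decoupling}. Let $(a_0,a_1,\dots,a_{k-1})$ be the $k$-cycle of $G_A$ on $\Omega^A$, with $G_A(a_r)=a_{r+1 \bmod k}$ and the $a_r$ distinct. I will start $F$ at
\[
\sigma(0)=(a_0,\,b_0),\qquad b_0:=F_B(a_{k-1}),
\]
so that the $B$-coordinate is chosen consistently with the $A$-cycle.

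\textbf{Two-step formulas.} First I check by induction from $F(\sigma_A,\sigma_B)=(F_A(\sigma_B),F_B(\sigma_A))$ that
\[
F^{2t}(a,b)=\bigl(G_A^t(a),\,G_B^t(b)\bigr),\qquad
F^{2t+1}(a,b)=\bigl(F_A(G_B^t(b)),\,F_B(G_A^t(a))\bigr).
\]
Next I use the intertwining identity $G_B\circ F_B=F_B\circ F_A\circ F_B=F_B\circ G_A$, which gives $G_B^t(F_B(a))=F_B(G_A^t(a))$. Applied to $b_0=F_B(a_{k-1})$, this yields $G_B^t(b_0)=F_B(a_{(t-1)\bmod k})$. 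Hence $G_B^k(b_0)=b_0$, and also $G_A^k(a_0)=a_0$. Therefore $F^{2k}(\sigma(0))=\sigma(0)$, so the orbit is genuinely periodic (no transient) and its minimal period $p$ divides $2k$.

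\textbf{Exactness of the period.} Suppose $p$ is even, say $p=2q$. Then the $A$-coordinate gives $G_A^q(a_0)=a_0$, and minimality of the $G_A$-cycle forces $k\mid q$, hence $p=2k$. Moreover, the even-time subsequence of the $A$-coordinate visits $a_0,\dots,a_{k-1}$, so the orbit has period exactly $2k$ in the sense that the $F^2$-dynamics on $A$ has exact period $k$.

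\textbf{Main obstacle: odd $p$.} The hard case is when $p$ is odd. Since $p\mid 2k$ and the even-time argument forces $k\mid p$, this can only happen when $k$ is odd and $p=k$, with $F^k(\sigma(0))=\sigma(0)$. This means the configuration at odd times coincides with a configuration at even times. An example is $F_A$, $F_B$ acting as mutually inverse copies, where the two sides carry the same cycle shifted.

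I would first try to exclude this case by a parity argument. The odd-time $A$-state is $F_A(G_B^t(b_0))=F_A(F_B(a_{t-1}))=a_t$, so an odd-time return requires the $A$-coordinates at times $k$ and $0$ to agree with a simultaneous $B$-match. I would check whether choosing a different $b_0$ in the fibre (any $b$ with $G_B$-orbit of the right length) breaks this coincidence.

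If that fails, I will state the conclusion precisely instead. $F$ has a periodic orbit with $F^{2k}=\mathrm{id}$ on it, whose even-step (two-step) dynamics has exact period $k$. The minimal period is $2k$ unless $k$ is odd and the $A$- and $B$-blocks are exchanged by a half-period shift, in which case it is $k$. This is the reading under which the lifting statement is used in Section~\ref{sec:forced_osc} (period counted in two-step $F^2$ time).
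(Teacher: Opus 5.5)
Your construction and closure argument are the paper's. Note that $F(a_0,F_B(a_{k-1}))=(a_0,F_B(a_0))$, which is exactly the paper's starting point $\sigma_B(0)=F_B(\sigma_A(0))$, so you follow the same orbit one step earlier. Your intertwining identity $G_B\circ F_B=F_B\circ G_A$ is what implicitly justifies the paper's claim that the $B$-coordinate also returns after $k$ double steps. The paper's proof ends at $F^{2k}(\sigma(0))=\sigma(0)$ and never discusses minimality. Up to that point you have reproduced it, in more detail.

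Where you go beyond the paper, you stop one line short. The odd case you call the main obstacle is settled by your own formula. The $A$-coordinate of $F^{s}(\sigma(0))$ is $a_{\lfloor s/2\rfloor}$ (indices mod $k$). So for odd $k=2t+1\ge 3$, the $A$-coordinate of $F^{k}(\sigma(0))$ is $a_t$ with $1\le t\le k-1$, hence $a_t\ne a_0$ because the cycle points are distinct. Thus $F^{k}(\sigma(0))\ne\sigma(0)$. Combined with your even-$p$ argument, the minimal period is exactly $2k$ for every $k\ge 2$. There is no need to vary $b_0$ or to fall back on counting periods in $F^2$-time.

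The half-period coincidence you describe is real, but it belongs to other base points. For every odd $k$, the choice $b_0=F_B(a_{(k-1)/2})$ gives $F^{k}(a_0,b_0)=(a_k,F_B(a_{(k-1)/2}))=(a_0,b_0)$. (If your ``mutually inverse copies'' literally means $F_A=F_B^{-1}$, then $G_A=\mathrm{id}$ and you are in the case $k=1$.)

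The only genuine failure is $k=1$. Then $b_0=F_B(a_0)$ and $F(a_0,b_0)=(G_A(a_0),F_B(a_0))=(a_0,b_0)$ is a fixed point. So neither your construction nor the paper's produces a $2$-cycle in this case. The decoupled form alone cannot produce one either: constant $F_A,F_B$ make $F$ constant. For $k=1$ you need a model-specific argument. For example, with no persistent nodes, take $\sigma_A\equiv\omega$ and $\sigma_B\equiv\omega'$ for distinct $H$-adjacent $\omega,\omega'$. Row-stochasticity and $T\circ b=\mathrm{id}$ then swap them in one synchronous step.

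Both you and the paper take the decoupled form of Lemma~\ref{lem:bipartite_decoupling} as given. Note that it holds literally only for the direct-jump map; under $\textsf{Step}$, a node's own current state also enters its update.
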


\begin{proof}
Let $\sigma_A(0)$ lie on a $k$-cycle of $G_A$ and define $\sigma_B(0):=F_B(\sigma_A(0))$.
Then $(\sigma_A(2t),\sigma_B(2t))$ evolves by $(G_A^t(\sigma_A(0)),G_B^t(\sigma_B(0)))$ and repeats after $k$ even steps, i.e.\ after $2k$ steps of $F$.
\end{proof}

\section{Forced oscillations with contrarian camps} 
\label{sec:forced_osc}

We now incorporate persistent sources and analyze forced oscillations, with emphasis on contrarian camps (persistent rankings partitioned into two antipodal camps in $\Omega$).

\subsection{Induced dynamics on free nodes}
Fix a set of persistent nodes $P\subset \V$ and boundary states $\bar\sigma_P$.
Variant~S induces a deterministic map $F_P:\Omega^U\to \Omega^U$ on the free nodes $U$.
When $F_P$ has no fixed point, oscillations are unavoidable.

\begin{thm}[No fixed point implies forced oscillation]
\label{thm:nofix_forced_osc}
If $F_P$ has no fixed point, then every trajectory of Variant~S on $U$ is eventually periodic with period $p>1$.
\end{thm}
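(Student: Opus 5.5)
The proof combines Proposition~\ref{prop:eventual_periodicity} with the observation that a period-$1$ cycle of a deterministic map is exactly a fixed point. Fix the persistent data $(P,\bar\sigma_P)$ and the deterministic tie-breaking in $\textsf{Step}$. Then Variant~S is the iteration of the single map $F_P:\Omega^U\to\Omega^U$ on the finite set $\Omega^U$, which has $|\Omega(m)|^{|U|}$ elements.

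First, I take an arbitrary initial free profile $\sigma_U(0)$. Proposition~\ref{prop:eventual_periodicity} gives $\mu\ge 0$ and $p\ge 1$ with $\sigma_U(t+p)=\sigma_U(t)$ for all $t\ge\mu$. I choose $p$ to be the \emph{minimal} such period, so that the states $\sigma_U(\mu),\dots,\sigma_U(\mu+p-1)$ are distinct and form a directed cycle of $F_P$.

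Second, I argue by contradiction that $p\neq 1$. Suppose $p=1$. Then $\sigma_U(\mu+1)=\sigma_U(\mu)$, and since $\sigma_U(\mu+1)=F_P(\sigma_U(\mu))$, the profile $\sigma_U(\mu)$ is a fixed point of $F_P$. This contradicts the hypothesis, so $p>1$. Because the initial profile was arbitrary, every trajectory eventually enters a cycle of length at least $2$.

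Third, I will check the remark that, by Proposition~\ref{prop:unbounded_fixed_periodic}(b) and Proposition~\ref{prop:local_equilibria_fixed}, having no fixed point is equivalent to there being no profile with $\tau_i(\sigma)=\sigma_i$ for all $i\in U$. This equivalence is not needed for the proof, but it shows the hypothesis can be tested directly in terms of Borda targets. The same argument would apply verbatim to $F_P^\infty$.

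There is no substantial obstacle. The one point needing care is that "period $p$" must be read as the minimal period. Otherwise a fixed point trivially satisfies $\sigma(t+2)=\sigma(t)$ with $p=2$, and the statement would be vacuous. With $p$ minimal, the identification of "period $1$" with "fixed point" is immediate.
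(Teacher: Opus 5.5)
Your proof is correct and is exactly the paper's argument: finiteness of $\Omega^U$ gives eventual periodicity, and a cycle of (minimal) length $1$ would be a fixed point of $F_P$. Your side remark equating fixed points of $F_P$ with profiles satisfying $\tau_i(\sigma)=\sigma_i$ implicitly assumes $\textsf{Step}(\alpha,\beta)\neq\alpha$ whenever $\alpha\neq\beta$, which the paper's allowance of ``no-move'' clauses does not strictly guarantee; since you do not use that remark, the proof is unaffected.
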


\begin{proof}
By Proposition~\ref{prop:eventual_periodicity}, every orbit enters a directed cycle.
If there are no fixed points, the cycle length cannot be $1$.
\end{proof}

Thus, a key task is to identify topological/weight conditions (and contrarian boundary conditions) that \emph{preclude} fixed points or that explicitly construct periodic orbits.

\subsection{Antipodes on the preference state space}

\begin{defn}[Antipodes]
An \emph{antipode} involution $\dagger:\Omega\to\Omega$ in the state space $\Omega$ represents a complete reversal of pairwise comparisons (for strict rankings, reverse the order; for weak orders, reverse the ordered partition).
\end{defn}

For averaged Borda scores, antipodes satisfy the score antisymmetry
\begin{equation}
\label{eq:borda_antipode}
b(\omega^\dagger)=(m-1)\one - b(\omega),
\end{equation}
so after centering scores (subtracting $(m-1)\one/2$) antipodes negate.

\begin{lem}[Antipodes exist on $\Omega(m)$ and are $H$-automorphisms]
\label{lem:antipodes_exist}
Represent $\omega\in\Omega(m)$ as an ordered partition
$(C_1\succ C_2\succ \cdots \succ C_k)$ of $\Alts=\{1,\dots,m\}$ into indifference classes.
Define the \emph{antipode} map $\dagger:\Omega(m)\to\Omega(m)$ by reversal,
\[
\omega^\dagger := (C_k\succ C_{k-1}\succ \cdots \succ C_1).
\]
Then $\dagger$ is an involution, i.e.\ $(\omega^\dagger)^\dagger=\omega$, and it preserves cover relations
in the weak-order lattice, hence induces an automorphism of its cover graph $H$.
Moreover, averaged Borda scores satisfy
\[
b(\omega^\dagger)=(m-1)\mathbf{1}-b(\omega).
\]
\end{lem}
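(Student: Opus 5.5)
We prove the three claims in turn: $\dagger$ is an involution, it preserves cover relations, and it transforms averaged Borda scores as stated. The first is immediate. Reversing the ordered partition $(C_1\succ\cdots\succ C_k)$ twice returns the original ordered partition. Also, $\omega^\dagger$ is again a weak order on $\Alts$ with the same indifference classes, so $\dagger$ is a well-defined bijection of $\Omega(m)$ onto itself.

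For the cover relations, we use the combinatorial description of $H$ fixed in Section~\ref{sec:state-space}: its edges are the elementary moves that split one indifference class into two consecutive classes, or merge two consecutive classes into one. Suppose $\omega'$ is obtained from $\omega$ by splitting $C_r$ into $(D\succ D')$ in positions $r,r+1$. Reversal sends this to splitting the class $C_r$ of $\omega^\dagger$, now in position $k-r+1$, into $(D'\succ D)$. That is again an elementary split, so $\omega'^\dagger$ covers or is covered by $\omega^\dagger$ in the same sense. Merges are the inverse operation and are handled symmetrically. Hence $\{\omega,\omega'\}\in E_H$ implies $\{\omega^\dagger,\omega'^\dagger\}\in E_H$. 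Since $\dagger$ is an involution, the converse also holds, so $\dagger$ is a graph automorphism of $H$ and preserves $d_H$.

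The main obstacle is the lattice order itself. If the order is refinement-based, reversal need not preserve the direction of the order: it may be an anti-automorphism rather than an automorphism of the lattice. We therefore state the claim only at the level of the undirected cover graph $H$, which is all that later results use. We also verify adjacency directly from the split/merge description rather than through the order.

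For the scores, use the strict Borda convention: position $q\in\{1,\dots,m\}$ receives $m-q$ points. If class $C_r$ occupies positions $a+1,\dots,a+c$ with $c=|C_r|$ and $a=\sum_{s<r}|C_s|$, then the averaged score of $x\in C_r$ is
\[
b(\omega)_x=\frac{1}{c}\sum_{q=a+1}^{a+c}(m-q)=m-a-\frac{c+1}{2}.
\]
In $\omega^\dagger$ the same class occupies positions $m-a-c+1,\dots,m-a$. Its elements therefore score $m-(m-a-c)-\frac{c+1}{2}=a+\frac{c-1}{2}$. Adding the two expressions gives $m-1$, so $b(\omega^\dagger)=(m-1)\one-b(\omega)$. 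This is \eqref{eq:borda_antipode}, and the centering remark follows at once.
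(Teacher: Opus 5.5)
Your proof is correct and follows the paper's own argument: double reversal gives the involution, a split of a class maps to a split of the same class in the mirrored position, and the Borda identity comes from reflecting positions (your explicit scores $m-a-\frac{c+1}{2}$ and $a+\frac{c-1}{2}$ just make concrete the paper's remark that ``rank $r$ becomes rank $m-1-r$''). One correction to your hedge: it is misplaced, because your own split computation shows that reversal preserves the \emph{direction} of refinement covers, so for the refinement order used here $\dagger$ is a genuine lattice automorphism; the anti-automorphism behaviour you worry about occurs for inversion-based (weak-Bruhat-type) orders, not for refinement.
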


\begin{proof}
Reversing an ordered partition yields another total preorder, and reversal is clearly an involution.
A cover relation in the weak-order lattice corresponds to splitting an indifference class into two consecutive classes;
reversing the order of classes sends such a split to a split, hence preserves cover relations and therefore adjacency in $H$.
For Borda scores with ties, each alternative's score is the average of the ranks it occupies.
Under reversal, rank $r$ becomes rank $(m-1-r)$, hence scores transform by $b\mapsto (m-1)\mathbf1-b$.
\end{proof}

\begin{defn}[Contrarian persistent camps]
A persistent boundary condition $\bar\sigma_P$ is \emph{contrarian} if $P$ can be partitioned as $P=P^+\sqcup P^-$ and there exists $\rho\in\Omega$ such that
$\bar\sigma_p=\rho$ for $p\in P^+$ and $\bar\sigma_p=\rho^\dagger$ for $p\in P^-$.
\end{defn}

\subsection{Alternating targets} 
The bounded-step rule implies that if a node's target alternates between two distinct states, the node cannot converge to a fixed point. Under a mild uniqueness assumption on shortest paths, one gets a clean two-cycle.

\begin{lem}[Forced local oscillation by alternating targets]
\label{lem:alt_targets}
Fix a node $i$ and suppose there exist $\beta\neq \gamma$ in $\Omega$ such that for all sufficiently large $t$,
\[
\tau_i(\sigma(t))=\beta \mbox{ for all even }t,\qquad \tau_i(\sigma(t))=\gamma \mbox{ for all odd }t.
\]
Assume the shortest path between $\beta$ and $\gamma$ in $H$ is unique and that $\textsf{Step}$ always moves to the unique neighbor on the corresponding geodesic.
Then $\sigma_i(t)$ is eventually periodic with period $2$ (it alternates between two adjacent states on that geodesic).
\end{lem}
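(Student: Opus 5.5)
The plan is to reduce the node's motion to a walk along the unique geodesic $\beta=g_0,g_1,\dots,g_L=\gamma$ in $H$, where $L=d_H(\beta,\gamma)\ge1$. Fix $t_0$ so large that the alternation of targets holds for all $t\ge t_0$. The argument has two stages: first show that the node eventually lies on this geodesic, and then show that once there it is trapped in a two-cycle.

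\textbf{Stage 1 (entering the geodesic).} I will use the potential $\Phi(x):=d_H(x,\beta)+d_H(x,\gamma)\ge L$, with equality exactly on $\beta$--$\gamma$ geodesics, i.e.\ on $\{g_0,\dots,g_L\}$ by uniqueness. A bounded step toward $\beta$ lowers $d_H(\cdot,\beta)$ by one and raises $d_H(\cdot,\gamma)$ by at most one, by the triangle inequality. So $\Phi$ is nonincreasing along the trajectory, in the same spirit as Lemma~\ref{lem:frozen_targets}. Since $\Phi$ takes finitely many integer values, it is eventually constant. The remaining task is to rule out a constant value $\Phi>L$.

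This is where I expect the real obstacle. A priori the node could bounce between two off-geodesic states $x,y$ with $y$ one step closer to $\beta$ and one step farther from $\gamma$. I would first try to exclude this using the hypothesis that $\textsf{Step}$ ``always moves to the unique neighbor on the corresponding geodesic'', read as a prescription of the tie-breaking. If that reading is insufficient, I would state explicitly the extra condition that the state eventually lies on the $\beta$--$\gamma$ geodesic (for instance, that it starts there or reaches it at some $t\ge t_0$). The conclusion of the lemma, that the node alternates between two adjacent states \emph{on that geodesic}, presupposes exactly this.

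\textbf{Stage 2 (invariance of the geodesic).} Suppose the state is $g_j$. I claim the geodesic from $g_j$ to $\beta$ is unique and equals $g_j,g_{j-1},\dots,g_0$, and symmetrically toward $\gamma$. Indeed, a different geodesic from $g_j$ to $\beta$ concatenated with $g_j,\dots,g_L$ would give a $\beta$--$\gamma$ walk of length $L$, hence a second shortest path, contradicting uniqueness. Therefore $\textsf{Step}(g_j,\beta)=g_{j-1}$ for $j>0$, $\textsf{Step}(g_0,\beta)=g_0$, and $\textsf{Step}(g_j,\gamma)=g_{j+1}$ for $j<L$. So the geodesic is forward-invariant, and the index $j(t)$ decreases by one at even times and increases by one at odd times (saturating at the endpoints).

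\textbf{Conclusion.} Suppose the node is at $g_j$ at an even time $t\ge t_0$. If $j>0$, it goes $g_j\to g_{j-1}\to g_j\to\cdots$. If $j=0$, it goes $g_0\to g_0\to g_1\to g_0\to g_1\to\cdots$. At an odd time the situation is symmetric. In every case $\sigma_i(t)$ eventually alternates between two adjacent states $g_r,g_{r+1}$ of the geodesic. These states are distinct since $L\ge1$, so the period is exactly $2$.
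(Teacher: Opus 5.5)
Your Stage~2 and conclusion are exactly the paper's proof. The paper fixes the unique geodesic $(v_0,\dots,v_L)$ and tracks the index $r(t)$, which drops by one at even times and rises by one at odd times, saturating at the endpoints. Your remark that every subpath of the unique $\beta$--$\gamma$ geodesic is itself the unique geodesic between its endpoints is what makes $\textsf{Step}$ forced at each $v_r$, whatever the tie-breaking. The paper uses this without saying so.

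The issue you isolate in Stage~1 is a genuine gap, and the paper has it too: its proof simply begins ``if $\sigma_i(t)=v_r$'', i.e.\ it assumes the node already sits on the geodesic. This cannot be derived from the stated hypotheses. Your first option does not help, because the hypothesis on $\textsf{Step}$ only speaks about moves from states on the geodesic, where you have shown they are forced anyway.

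A counterexample already exists in $H$ for $m=3$. Take $\beta=(x\succ y\sim z)$ and $\gamma=\beta^\dagger=(y\sim z\succ x)$; their unique geodesic is $\beta,(x\sim y\sim z),\gamma$.
\begin{itemize}
\item The state $(y\succ z\succ x)$ is at distance $3$ from $\beta$, and both its neighbors $(y\succ x\sim z)$ and $\gamma$ are geodesic first steps toward $\beta$.
\item The state $(y\succ x\sim z)$ is at distance $2$ from $\gamma$, and both $(x\sim y\sim z)$ and $(y\succ z\succ x)$ are geodesic first steps toward $\gamma$.
\end{itemize}
Fix the admissible tie-breaking $\textsf{Step}((y\succ z\succ x),\beta)=(y\succ x\sim z)$ and $\textsf{Step}((y\succ x\sim z),\gamma)=(y\succ z\succ x)$. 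A node at $(y\succ z\succ x)$ at an even time $t\ge t_0$ then alternates between these two states forever, with $\Phi\equiv 4>L=2$, and never touches the geodesic.

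These alternating targets are realizable. Give $i$ in-weight $0.4$ from a persistent node at $\gamma$, and in-weight $0.3$ each from two free nodes that are at $(x\succ y\succ z)$, $(x\succ z\succ y)$ at even times and at $(x\sim y\succ z)$, $(x\sim z\succ y)$ at odd times. The aggregate scores are then $(1.2,0.9,0.9)$ at even times and $(0.9,1.05,1.05)$ at odd times.

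So the parenthetical conclusion of the lemma is false as stated. Your potential $\Phi$ correctly shows that the orbit settles on one level, but it cannot exclude $\Phi>L$. Even a constant $\Phi$ only forces $d_H(\sigma_i(t),\beta)$ to alternate between two values, not the states themselves. Your fallback hypothesis---that $\sigma_i(t)$ lies on the $\beta$--$\gamma$ geodesic for some $t\ge t_0$---is therefore necessary, not merely convenient. With it added, your argument is a complete proof, and it proves exactly the statement the paper's proof actually establishes.
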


\begin{proof}
Let $(v_0,\dots,v_L)$ be the unique geodesic from $v_0=\beta$ to $v_L=\gamma$.
For $t$ large enough, if $\sigma_i(t)=v_r$ and the target is $\beta=v_0$, the unique bounded step moves to $v_{r-1}$ when $r>0$ (and stays at $v_0$ when $r=0$).
If the target is $\gamma=v_L$, the unique bounded step moves to $v_{r+1}$ when $r<L$ (and stays at $v_L$ when $r=L$).
Thus, for all sufficiently large $t$, the index $r(t)$ alternates deterministically between $r$ and $r-1$ (or between $0$ and $1$, or between $L$ and $L-1$ at the boundaries), yielding a period-$2$ orbit in $\Omega$.
\end{proof}

Lemma~\ref{lem:alt_targets} reduces forced oscillation questions to identifying conditions that make targets alternate, which can be achieved by bipartite forcing and (in undirected settings) by spectral parity.

\subsection{Spectral parity for bipartite walks} 

We record a basic spectral fact for random walks on bipartite graphs and explain how it can be leveraged to force alternating score signals under contrarian camps.
This is where spectral theory enters most cleanly.

\begin{lem}[A $(-1)$ eigenmode for bipartite random walks]
\label{lem:-1_eigenmode}
Let $G$ be a connected undirected bipartite graph with partition $\V=A\sqcup B$ and random-walk matrix $W=D^{-1}A$ (row-stochastic).
Assume there are no self-loops.
Define $f\in\R^{\V}$ by $f_i=1$ for $i\in A$ and $f_i=-1$ for $i\in B$.
Then $W f=-f$, so $-1$ is an eigenvalue of $W$.
\end{lem}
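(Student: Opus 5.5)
The plan is a direct, entrywise computation of $Wf$ from the definition $W=D^{-1}A$, using only bipartiteness, connectivity, and the absence of self-loops. Write $A$ for the adjacency matrix (the overloading with the part $A$ is harmless). Let $D=\mathrm{diag}(d_1,\dots,d_n)$ with $d_i=\sum_j A_{ij}$, so that $w_{ij}=A_{ij}/d_i$.

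First, $W$ must be well defined, which requires $d_i>0$ for every $i$. This follows from connectivity provided $|\V|\ge 2$, since every vertex then has at least one neighbor. The degenerate one-vertex case would have to be excluded, or is implicitly excluded because $D^{-1}$ is assumed to exist; I would note this explicitly.

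Next, fix $i\in A$. Since $G$ is bipartite with parts $A,B$ and has no self-loops, every neighbor $j$ of $i$ (every $j$ with $A_{ij}>0$) lies in $B$. Hence
\[
(Wf)_i=\frac{1}{d_i}\sum_{j} A_{ij} f_j=\frac{1}{d_i}\sum_{j\in B}A_{ij}\cdot(-1)=-\frac{d_i}{d_i}=-1=-f_i .
\]
Symmetrically, for $i\in B$ all neighbors lie in $A$, so $(Wf)_i=+1=-f_i$. This gives $Wf=-f$. Since $f\neq 0$, $-1$ is an eigenvalue of $W$.

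The only real point of care is the role of the hypotheses. The no-self-loop assumption is needed because a loop at $i$ would contribute $A_{ii}f_i$ with the wrong sign. Bipartiteness ensures there are no edges within $A$ or within $B$. Connectivity is used only for positivity of the degrees, so that $W$ is row-stochastic; the eigenvector identity itself needs no spectral theory. It is also worth noting in passing that this is the same "influence across the cut only" structure as in Lemma~\ref{lem:bipartite_decoupling}. Combined with the centered antipodal antisymmetry \eqref{eq:borda_antipode}, this sets up the alternating score signal used later.
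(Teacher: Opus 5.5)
Your proposal is correct and follows the paper's argument: for $i\in A$ every neighbor lies in $B$, so $(Wf)_i$ is an average of $-1$'s, and symmetrically for $i\in B$. Your remark that connectivity with $|\V|\ge 2$ is what makes $D^{-1}$ well defined is a reasonable clarification that the paper leaves implicit.
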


\begin{proof}
If $i\in A$, all neighbors lie in $B$, so $(Wf)_i$ is the average of $f_j=-1$ over neighbors of $i$, hence $(Wf)_i=-1=-f_i$.
Similarly, if $i\in B$, neighbors lie in $A$ and $(Wf)_i=1=-f_i$.
\end{proof}

\begin{rem}[Period-$2$ parity and even cycles]
The presence of a $(-1)$ eigenvalue is the linear-algebraic signature of period-$2$ parity in bipartite random walks \citep{levinpereswilmer2009,chung1997}.
In our preference dynamics, Borda aggregation is linear in the score vectors, but projection back to orderings and bounded steps introduce nonlinearity.
Nevertheless, away from score-tie hyperplanes (so targets are locally constant), the dynamics is \emph{piecewise affine} in score space, and the $(-1)$ mode provides a robust mechanism for alternating target signals.
\end{rem}

\subsection{Reachability and even-period forcing} 

In a directed weighted influence graph, persistent nodes can generate \emph{forced} oscillations only in the portion of the network they can reach.
Write $P\subset V$ for persistent nodes and $U=V\setminus P$ for free nodes.
Let $D_W$ be the simple directed \emph{influence support digraph} on $V$ with an arc $j\to i$ whenever $w_{ij}>0$.
For a set $S\subseteq V$, let $\Reach(S)$ denote the set of vertices reachable from $S$ in $D_W$.

\begin{lem}[Irrelevance of unreachable persistence]
\label{lem:unreachable_persistence}
If $i\in U\setminus \Reach(P)$, then for every $t\ge 0$ the local aggregate score $s_i(t)$ depends only on the initial states on $\Reach(\{i\})$ (and not on the persistent boundary condition on $P$).
In particular, persistent camps confined to components that do not reach $i$ have no effect on the evolution of $i$.
\end{lem}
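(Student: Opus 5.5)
The natural route is to make precise the notion of ``dependence'' through the backward (in-)neighborhood structure of $D_W$, and then argue by induction on $t$. The key observation is that the set $R_i:=\{\,j\in V:\ i\in\Reach(\{j\})\,\}$ of \emph{ancestors} of $i$ in $D_W$ (nodes from which $i$ can be reached) is closed under taking in-neighbors. If $j\in R_i$ and $w_{jk}>0$, then $k\to j$ is an arc, so $k\in R_i$. I read ``$\Reach(\{i\})$'' in the statement as this backward-reachable set, i.e.\ reachability in the reversed support digraph, since influence flows along arcs $j\to i$. The hypothesis $i\notin\Reach(P)$ says exactly that no $p\in P$ can reach $i$, i.e.\ $R_i\cap P=\varnothing$. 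So every node in $R_i$ is free.

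First, I would prove by induction on $t$ the stronger claim that for every $j\in R_i$ the state $\sigma_j(t)$ is a function only of the initial restriction $\sigma_{R_i}(0)$. At $t=0$ this is trivial. For the inductive step, fix $j\in R_i$. Since $j\in U$, it updates via $\sigma_j(t+1)=\textsf{Step}(\sigma_j(t),T(s_j(t)))$, and $s_j(t)=\sum_{k\in N_j^-}w_{jk}\,b(\sigma_k(t))$. Every $k\in N_j^-$ lies in $R_i$ by the closure property, so by the induction hypothesis each such $\sigma_k(t)$ depends only on $\sigma_{R_i}(0)$. The same holds for $\sigma_j(t)$. Since $\textsf{Step}$ is deterministic with fixed tie-breaking, $\sigma_j(t+1)$ depends only on $\sigma_{R_i}(0)$. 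The same argument applies verbatim to Variants~S$^\infty$ and A/A$^\infty$, provided the schedule is fixed or is treated as exogenous.

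Finally, applying the claim to the in-neighbors of $i$ shows that $s_i(t)=\sum_{k\in N_i^-}w_{ik}b(\sigma_k(t))$ depends only on $\sigma_{R_i}(0)$. Since $R_i\cap P=\varnothing$, neither $P$ nor the boundary data $\bar\sigma_P$ enter at all. Hence changing $\bar\sigma_P$, or placing contrarian camps in components that do not reach $i$, leaves the trajectory of $i$ unchanged, which is the ``in particular'' statement.

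The only real subtlety is the definitional one: matching the orientation of $\Reach(\{i\})$ to the direction of influence, and checking that $i\notin\Reach(P)$ really excludes all persistent nodes from $i$'s ancestor set. This includes the case $i\in N_i^-$ (self-loops), which the closure argument handles without change. Once that orientation is fixed, the induction is routine.
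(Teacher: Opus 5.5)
Your proof is correct and follows the same route as the paper's: both propagate dependence backward along the arcs of $D_W$ by induction on $t$. Your reading of $\Reach(\{i\})$ as the ancestor set of $i$ matches the paper's own argument, which works with ``the set of vertices that can reach $i$ in $D_W$''; your version is simply more explicit, using closure of the ancestor set under in-neighbors and accounting for the self-dependence in $\textsf{Step}$, where the paper only sketches a paths-of-length-at-most-$t$ containment.
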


\begin{proof}
By definition of $D_W$, an arc $j\to i$ means that the state of $j$ enters $s_i(t)$ with positive weight at time $t$.
Iterating the update, the set of vertices that can influence $i$ by time $t$ is contained in the set of vertices that can reach $i$ in $D_W$ by a directed path of length at most $t$.
If no directed path from $P$ reaches $i$, then no persistent node can enter $s_i(t)$ at any time.
\end{proof}

We now isolate a clean topological/spectral mechanism that produces \emph{even-period} oscillations under a contrarian partition of persistent nodes into two antipodal camps.
Fix an antipodal pair $(\rho,\rho^\dagger)\in\Omega\times\Omega$ and partition the persistent nodes into two camps
\[
P^+\sqcup P^- = P,\qquad \sigma_p(t)\equiv \rho\ \ (p\in P^+),\qquad \sigma_p(t)\equiv \rho^\dagger\ \ (p\in P^-).
\]
Let $C\subseteq U$ be a \emph{closed communicating class} for the free-to-free kernel $W_{UU}$: for all $i\in C$ we have $\sum_{j\in U\setminus C} w_{ij}=0$ and the induced digraph $D_W[C]$ is strongly connected.
Assume moreover that $W_{CC}$ is \emph{periodic with period $2$} (imprimitive): equivalently, $C$ admits a cyclic decomposition $C=A\sqcup B$ such that $w_{ij}>0$ implies $(i,j)\in A\times B$ or $(i,j)\in B\times A$, and (under irreducibility) this is equivalent to $-1\in\sigma(W_{CC})$ \citep{seneta2006,levinpereswilmer2009}.
In this case, the two-step kernel $W_{CC}^2$ leaves $A$ and $B$ invariant and defines aperiodic chains on each cyclic class.

The following theorem gives an explicit two-node construction of a
forced period--$2$ orbit.

\begin{thm}
\label{thm:period2_forcing}
Let $m=2$, let
$\rho=(a_1\succ a_2)$, $\rho^\dagger=(a_2\succ a_1)$, and let
$\eta=(a_1\sim a_2)$. Let $P^+=\{p\}$ and $P^-=\{q\}$ be
persistent camps, with
\[
\bar\sigma_p=\rho,\qquad \bar\sigma_q=\rho^\dagger.
\]
Let $C=\{i,j\}$ be two free nodes. For some
$0<\varepsilon<\tfrac12$, suppose
\[
w_{ij}=1-\varepsilon,\qquad w_{ip}=\varepsilon,
\qquad w_{ji}=1-\varepsilon,\qquad w_{jq}=\varepsilon,
\]
and that all other weights in the rows of $i$ and $j$ are zero.

Then the following hold.
\begin{description}
  \item[(a)] Under Variant~S$^\infty$, the initial state
  \[
  (\sigma_i(0),\sigma_j(0))=(\rho,\rho^\dagger)
  \]
  belongs to a period--$2$ orbit alternating with
  $(\rho^\dagger,\rho)$. Along this Variant~S$^\infty$ orbit, each
  free node's aggregate score has $\ell^\infty$-distance at least
  $(1-2\varepsilon)/2$ from the tie hyperplane.

  \item[(b)] Under Variant~S, the same initial state belongs to a
  period--$2$ orbit alternating with $(\eta,\eta)$. Along this
  Variant~S orbit, each free node's aggregate score has
  $\ell^\infty$-distance at least
  \[
  \frac{\min\{\varepsilon,1-2\varepsilon\}}{2}>0
  \]
  from the tie hyperplane.
\end{description}
\end{thm}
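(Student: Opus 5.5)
The plan is to verify both orbits by direct computation. For $m=2$ the score vectors are explicit and $H$ is a three-vertex path, so no earlier machinery is needed beyond the definitions of $b$, $T$, $\textsf{Jump}$ and $\textsf{Step}$.

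\textbf{Setup.} With the averaged Borda convention for $m=2$ we have
\[
b(\rho)=(1,0),\qquad b(\rho^\dagger)=(0,1),\qquad b(\eta)=\bigl(\tfrac12,\tfrac12\bigr).
\]
The only tie hyperplane is $\{s_1=s_2\}$. The $\ell^\infty$-distance of $s$ to it is $|s_1-s_2|/2$: the nearest point is $\bigl(\tfrac{s_1+s_2}{2},\tfrac{s_1+s_2}{2}\bigr)$, and any point on the hyperplane differs from $s$ by at least $|s_1-s_2|/2$ in some coordinate. Moreover $T(s)=\rho$ if $s_1>s_2$ and $T(s)=\rho^\dagger$ if $s_1<s_2$. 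Finally, $\Omega(2)=\{\rho,\eta,\rho^\dagger\}$ and $H$ is the path $\rho - \eta - \rho^\dagger$. Hence $d_H(\rho,\rho^\dagger)=2$, the geodesic is unique, and $\textsf{Step}(\rho,\rho^\dagger)=\textsf{Step}(\rho^\dagger,\rho)=\eta$, with no tie-breaking involved.

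\textbf{Part (a).} I compute the aggregate scores at the two profiles of the claimed orbit.
\begin{itemize}
\item At $(\rho,\rho^\dagger)$:
\[
s_i=(1-\varepsilon)b(\rho^\dagger)+\varepsilon b(\rho)=(\varepsilon,1-\varepsilon),\qquad
s_j=(1-\varepsilon)b(\rho)+\varepsilon b(\rho^\dagger)=(1-\varepsilon,\varepsilon).
\]
The targets are $\rho^\dagger$ and $\rho$, so $\textsf{Jump}$ yields $(\rho^\dagger,\rho)$. Both margins equal $(1-2\varepsilon)/2>0$, since $\varepsilon<\tfrac12$.
\item At $(\rho^\dagger,\rho)$: $s_i=(1-\varepsilon)(1,0)+\varepsilon(1,0)=(1,0)$ and $s_j=(0,1)$. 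The targets are $\rho$ and $\rho^\dagger$, so the profile returns to $(\rho,\rho^\dagger)$. Both margins equal $\tfrac12\ge(1-2\varepsilon)/2$.
\end{itemize}
The two profiles are distinct, so the orbit has exact period $2$.

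\textbf{Part (b).} The same first computation applies.
\begin{itemize}
\item At $(\rho,\rho^\dagger)$ the targets are $\rho^\dagger$ for $i$ and $\rho$ for $j$, with margin $(1-2\varepsilon)/2$. One bounded step moves each free node to the midpoint $\eta$, giving $(\eta,\eta)$.
\item At $(\eta,\eta)$:
\[
s_i=(1-\varepsilon)\bigl(\tfrac12,\tfrac12\bigr)+\varepsilon(1,0)=\bigl(\tfrac{1+\varepsilon}{2},\tfrac{1-\varepsilon}{2}\bigr),
\]
and symmetrically $s_j=\bigl(\tfrac{1-\varepsilon}{2},\tfrac{1+\varepsilon}{2}\bigr)$. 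The targets are $\rho$ and $\rho^\dagger$, both at distance $\varepsilon/2$ from the tie hyperplane. Since $\eta$ is adjacent to both, $\textsf{Step}$ returns the profile to $(\rho,\rho^\dagger)$.
\end{itemize}
Thus the orbit alternates $(\rho,\rho^\dagger)\leftrightarrow(\eta,\eta)$ with exact period $2$. The minimal margin along it is $\min\{\varepsilon,1-2\varepsilon\}/2>0$.

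\textbf{Main obstacle.} There is no real difficulty. The only points needing care are fixing the $\ell^\infty$ margin formula $|s_1-s_2|/2$ and checking that $\textsf{Step}$ is unambiguous. The latter holds because $H$ is a path, so the deterministic tie-breaking clause of Definition~\ref{defn:bounded_step} is never invoked.
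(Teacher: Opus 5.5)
Your proposal is correct and follows essentially the same direct-computation argument as the paper. You evaluate the aggregate scores at $(\rho,\rho^\dagger)$, $(\rho^\dagger,\rho)$ and $(\eta,\eta)$, read off the targets, and use the unique geodesic $\rho\to\eta\to\rho^\dagger$ in $H$ for the bounded step; your explicit derivation of the $\ell^\infty$ margin $|s_1-s_2|/2$ and your remark that $\textsf{Step}$ never needs tie-breaking just make precise two points the paper leaves implicit.
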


\begin{proof}
The Borda score vectors are
\[
b(\rho)=(1,0),\qquad b(\rho^\dagger)=(0,1),
\qquad b(\eta)=(\tfrac12,\tfrac12).
\]
At $(\sigma_i,\sigma_j)=(\rho,\rho^\dagger)$, the aggregate scores
are
\[
s_i=(\varepsilon,1-\varepsilon),
\qquad
s_j=(1-\varepsilon,\varepsilon).
\]
Thus $\tau_i=\rho^\dagger$ and $\tau_j=\rho$. Under Variant~S$^\infty$,
the next state is $(\rho^\dagger,\rho)$. At that state, the aggregate
scores are
\[
s_i=(1,0),
\qquad
s_j=(0,1),
\]
so the next state is $(\rho,\rho^\dagger)$. The score gaps at the two
states of this orbit are $1-2\varepsilon$ and $1$, respectively.
Therefore the aggregate scores remain at $\ell^\infty$-distance at
least $(1-2\varepsilon)/2$ from the tie hyperplane, proving part~(a).

In the weak-order cover graph for $m=2$, the unique shortest path from
$\rho$ to $\rho^\dagger$ is
\[
\rho\longrightarrow\eta\longrightarrow\rho^\dagger.
\]
Consequently, under Variant~S, one bounded step from
$(\rho,\rho^\dagger)$ yields $(\eta,\eta)$. At $(\eta,\eta)$, the
aggregate scores are
\[
s_i=(\tfrac{1+\varepsilon}{2},\tfrac{1-\varepsilon}{2}),
\qquad
s_j=(\tfrac{1-\varepsilon}{2},\tfrac{1+\varepsilon}{2}),
\]
so $\tau_i=\rho$ and $\tau_j=\rho^\dagger$. One bounded step therefore
returns to $(\rho,\rho^\dagger)$. The score gaps at the two states of
this Variant~S orbit are $1-2\varepsilon$ and $\varepsilon$,
respectively. Hence the aggregate scores remain at
$\ell^\infty$-distance at least
$\min\{\varepsilon,1-2\varepsilon\}/2$ from the tie hyperplane,
proving part~(b).
\end{proof}

The two-node forcing gadget of Theorem~\ref{thm:period2_forcing} is
illustrated in Figure~\ref{fig:forcing_gadget}.

\begin{figure}[h!]
\centering
\includegraphics[width=.85\linewidth]{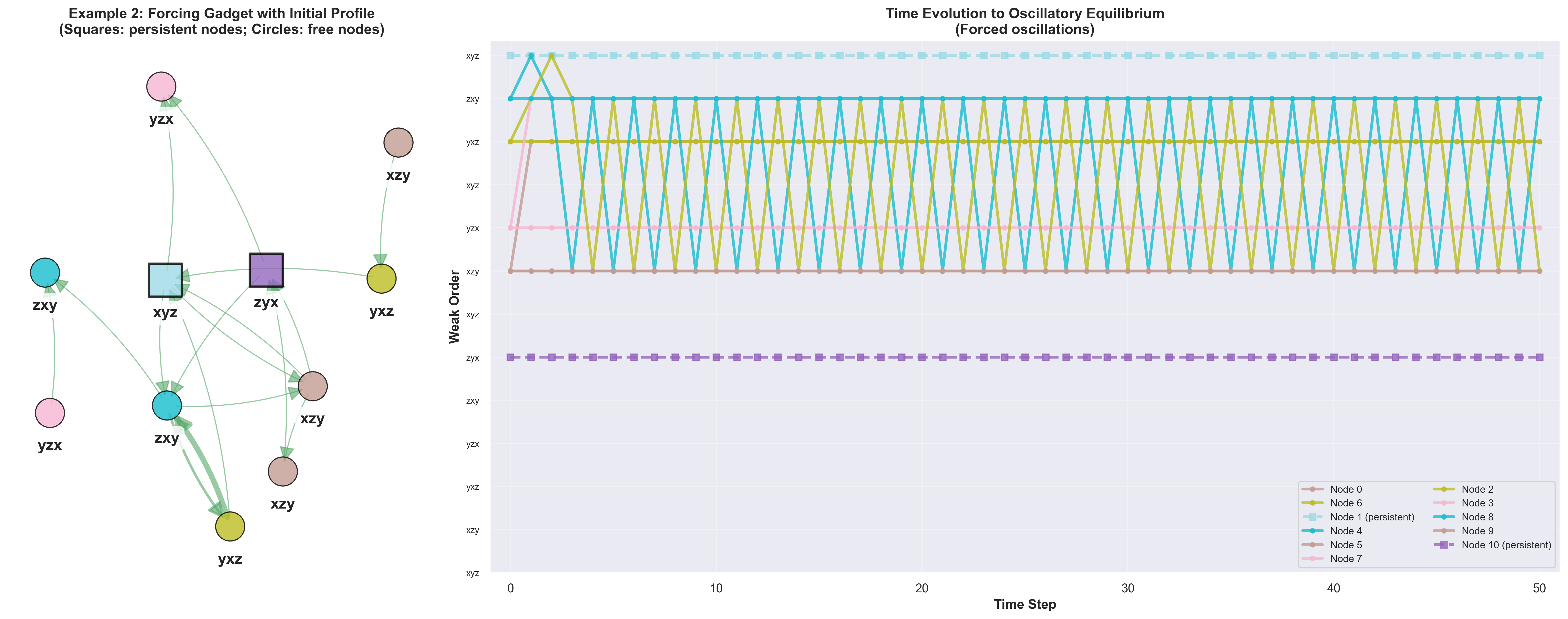}
\caption{Minimal two-node forcing gadget of Theorem~\ref{thm:period2_forcing}.}
\label{fig:forcing_gadget}
\end{figure}

\begin{rem}[Even periods and $2k$--oscillations]
\label{rem:2k}
Theorem~\ref{thm:period2_forcing} formalizes the intuition that contrarian camps interact with a period--2 (bipartite) influence kernel to produce \emph{even-period} behavior.
More generally, if the two-step return map on one cyclic class admits a $k$-cycle, then the full synchronous dynamics admits a $2k$-cycle by bipartite lifting.
This mechanism explains why, in simulations, even periods (including $4$ and higher) are naturally associated with bipartite or near-bipartite structure.
\end{rem}
\section{Robustness (structural stability) away from tie hyperplanes}

\begin{prop}[Robustness of periodic orbits under small weight perturbations]
\label{prop:robustness}
Consider Variant~S with fixed $P$ and deterministic tie-breaking in $T$ and $\textsf{Step}$.
Let $\sigma(t)$ be a periodic orbit of period $p$.
Assume that along the orbit, every node’s aggregated score $s_i(t)$ remains at distance at least $\delta>0$ from all tie hyperplanes of $T$.
Then there exists $\varepsilon>0$ such that for any weight matrix $\widetilde W$ satisfying
\[
\lVert \widetilde W - W \rVert_{\infty} < \varepsilon .
\]
and still row-stochastic and compatible with the same directed edge set, the same periodic symbolic dynamics (i.e., the same targets and $\textsf{Step}$ choices) persists. In particular, a period-$p$ orbit exists for $\widetilde W$ as well. Here $\lVert\cdot\rVert_{\infty}$ denotes the entrywise maximum norm,
$\lVert A \rVert_{\infty}=\max_{i,j}|A_{ij}|$.
\end{prop}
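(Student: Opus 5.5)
The plan is to show that the same finite sequence of profiles $\sigma(0),\dots,\sigma(p-1)$ remains an orbit of the perturbed map $\widetilde F_P$. Since the orbit is a sequence of discrete profiles in $\Omega^U$ with $P$ fixed, it is enough to check that at each time $t$ and each free node $i$ the perturbed target equals the original one, $\widetilde\tau_i(\sigma(t))=\tau_i(\sigma(t))$. The $\textsf{Step}$ rule uses only the current state and the target, together with a fixed deterministic tie-breaking that does not depend on $W$. So once the targets agree, $\textsf{Step}(\sigma_i(t),\widetilde\tau_i(\sigma(t)))=\sigma_i(t+1)$ follows for every $i$ and $t$, and $\widetilde F_P^{\,t}(\sigma(0))=\sigma(t)$ by induction. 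Periodicity with the same period $p$ is then immediate, because the profiles themselves are unchanged.

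The key estimate compares scores evaluated on the same profile. I would write
\[
\widetilde s_i(t)-s_i(t)=\sum_{j\in V}(\widetilde w_{ij}-w_{ij})\,b(\sigma_j(t)).
\]
Averaged Borda scores lie in $[0,m-1]$, and by row-stochasticity $\sum_j(\widetilde w_{ij}-w_{ij})=0$. Replacing $b(\sigma_j)$ by the centered score $b(\sigma_j)-\tfrac{m-1}{2}\mathbf 1$, whose entries have modulus at most $(m-1)/2$, leaves the sum unchanged. This gives
\[
\lVert \widetilde s_i(t)-s_i(t)\rVert_\infty\le \tfrac{m-1}{2}\,|N_i^-|\,\lVert\widetilde W-W\rVert_\infty\le \tfrac{m-1}{2}\,n\,\varepsilon .
\]
Compatibility with the same edge set ensures that only in-neighbours enter the sum, so the bound can use $\max_i|N_i^-|$ in place of $n$.

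Next I would choose $\varepsilon<2\delta/((m-1)n)$. Then $\widetilde s_i(t)$ lies in the open $\ell^\infty$-ball of radius $\delta$ around $s_i(t)$. By hypothesis this ball misses every tie hyperplane of $T$. The ball is convex, and the sign of each difference $s_a-s_b$ is constant on any connected set avoiding the hyperplane $\{s_a=s_b\}$. Hence $\widetilde s_i(t)$ lies in the same open cone of $T$, and $T(\widetilde s_i(t))=T(s_i(t))$. Only finitely many pairs $(i,t)$ occur, since $t$ ranges over a single period, so one uniform $\varepsilon$ serves for all of them.

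The main obstacle is interpretive. The hypothesis that the scores stay at distance at least $\delta$ from all tie hyperplanes forces every target $\tau_i(\sigma(t))$ to be a strict order. The argument must make sure that the notion of distance matches the norm used in the perturbation bound; if the distance is Euclidean, I would convert it with $\lVert\cdot\rVert_2\le\sqrt m\,\lVert\cdot\rVert_\infty$ and shrink $\varepsilon$ accordingly. A second point to state explicitly is that persistent nodes contribute through the same weight rows, so their scores are covered by the same bound. Finally, the claim that a period-$p$ orbit exists for $\widetilde W$ follows because the profiles, and hence the minimal period, are literally identical.
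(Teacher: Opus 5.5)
Your proposal is correct and follows the same argument as the paper's proof. The perturbed scores stay within the tie margin, so the targets are unchanged. The $W$-independent deterministic $\textsf{Step}$ outputs are therefore unchanged too, and the identical orbit persists.

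Your centered-score estimate $\lVert\widetilde s_i(t)-s_i(t)\rVert_\infty\le\tfrac{m-1}{2}\,n\,\lVert\widetilde W-W\rVert_\infty$ makes explicit the ``sufficiently small'' that the paper leaves implicit. One small correction: the refinement of $n$ should be the in-degree in $G$, not $|N_i^-|$, because a compatible $\widetilde W$ may put weight on edges of $E$ where $w_{ij}=0$.
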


\begin{proof}
On each region cut out by tie hyperplanes, $T$ is constant.
A uniform margin $\delta$ implies that sufficiently small perturbations of $W$ perturb each $s_i(t)$ by less than the distance to the nearest tie hyperplane, so targets are unchanged on the orbit.
With unchanged targets and deterministic $\textsf{Step}$, the orbit persists.
\end{proof}

\section{Variant A: asynchronous updates} 

Variant~A replaces the deterministic map $F_P$ by a (typically) irreducible Markov chain on the finite state space $\Omega^U$, with transition structure determined by the (possibly random or adversarial) update schedule \citep{istrate_marathe_ravi2012,macauley_mortveit2009}. Even on bipartite graphs, the two-step decoupling of Lemma~\ref{lem:bipartite_decoupling} need not persist under asynchronous (single-site) updates, so the parity mechanism that lifts $k$-cycles to $2k$-cycles is specific to synchronous dynamics \citep{nareddy_machta2020,macauley_mortveit2009}. Nevertheless, Variant~A can still exhibit long transients (metastable behavior) and may cycle on subsets via non-absorbing recurrent classes of the induced Markov chain \citep{cirillo_jacquier_spitoni2022,norris1997}.

A general sufficient condition for almost-sure convergence of Variant~A is the existence of a strict Lyapunov (potential) function that decreases under every single-node update; in that case, the induced asynchronous dynamics cannot visit any state infinitely often except fixed points, and thus reaches a fixed point in finite time almost surely \citep{norris1997,blume1993}. Identifying such Lyapunov functions for bounded Borda dynamics (beyond special symmetric cases) remains an open direction.

\section{Discussion and Conclusion}

This paper develops a mathematical framework for understanding preference dynamics on networks through bounded Borda aggregation. By coupling a social influence network with the discrete geometry of weak-order preference spaces, we obtain a tractable yet rich model that exhibits both consensus and oscillatory behavior.

\subsection{Main findings}

Our analysis reveals two distinct mechanisms for generating nontrivial periodic orbits. First, \emph{self-sustained oscillations} emerge endogenously from the interplay of directed influence cycles in $G$ and cycles in the preference move graph $H$, even without external persistent sources (Theorem~\ref{thm:self_traveling_wave}). 
These self-oscillations arise from the directed-cycle topology of the influence
network $G$ and the cycle structure of the preference graph $H$; as established
in Proposition~\ref{prop:unbounded_fixed_periodic}(c), they persist whether the
update step is bounded or unbounded.

Second, \emph{forced oscillations} arise when contrarian persistent camps pin antipodal preference orderings and interact through a bipartite network structure. The key insight is that the $(-1)$ eigenmode of the bipartite random-walk matrix $W$ creates alternating influence patterns: nodes in partition $A$ receive aggregated signals from partition $B$ and vice versa. Combined with the bounded-step rule, this spectral parity mechanism generates period-2 (or more generally, even-period) oscillations among free nodes (Theorem~\ref{thm:period2_forcing}). The lifting from period-$k$ cycles on the two-step return map to period-$2k$ cycles on the original dynamics (Theorem~\ref{thm:even_period_lift}) shows how bipartite structure systematically doubles periods.

\subsection{Robustness and structural stability}

A notable feature of these oscillations is their robustness away from score-tie hyperplanes. Proposition~\ref{prop:robustness} establishes that periodic orbits persist under small perturbations of the weight matrix $W$, provided that aggregated scores maintain a uniform margin from tie boundaries. This structural stability reflects the fact that the dynamics are determined by discrete combinatorial choices (targets and bounded steps) once the scores clear the tie hyperplanes. Thus oscillatory behavior is not fragile but rather a stable feature of the underlying preference geometry and network topology.

\subsection{Contrast with asynchronous updating}

The synchronous (Variant S) and asynchronous (Variant A) dynamics exhibit fundamentally different behaviors. In Variant S, the two-step decoupling on bipartite graphs (Lemma~\ref{lem:bipartite_decoupling}) enables the parity mechanism that produces even-period orbits. In Variant A, where single nodes update sequentially, this decoupling is lost, and the dynamics typically converge to fixed points under mild conditions. This contrast highlights the importance of update timing and coordination in preference dynamics: synchronized collective updates can sustain oscillations that asynchronous individual updates cannot. 

\subsection{Broader implications and open directions}

From a social-science perspective, our results formalize conditions under which preference disagreement and polarization persist dynamically rather than resolving to consensus. The model captures several realistic features: (i) bounded rationality (agents move incrementally rather than jumping to optimal positions), (ii) ordinal outcomes (preferences are rankings, not scalars), (iii) relational structure (influence is mediated by a social network), and (iv) heterogeneous dispositions (persistent nodes represent stubborn actors or institutional anchors). The interplay of these elements can generate stable oscillations even without external forcing.

Natural extensions include: (a) multiple persistent camps with arbitrary cardinality and antipodal structure; (b) multi-camp partitions and higher-order network structures beyond bipartite graphs; (c) stochastic perturbations and noise to study robustness of oscillations; (d) adaptive network models where the influence graph itself evolves; (e) a deeper theoretical analysis of asynchronous updates (Variant A), the identification of parameter regimes admitting periodic or chaotic behavior, and the comparison of convergence rates and long-term dynamics between synchronous and asynchronous protocols; and (f) empirical validation on real preference-aggregation datasets from voting, committee decisions, or online deliberation platforms.

A key open problem is to characterize the full bifurcation structure of the bounded Borda dynamics as a function of network topology, weights, and the preference move graph. In particular, identifying Lyapunov functions or other global invariants for Variant A remains elusive and would enable stronger convergence guarantees. Finally, the relationship between oscillations in bounded Borda dynamics and periodic behavior in other preference-aggregation rules (e.g., majority dynamics, approval voting) deserves investigation.

Another important direction concerns the behavior of the dynamics
under structured preference domains. Fix an axis
\[
a_1\prec a_2\prec\cdots\prec a_m
\]
on $\Alts$. For weak orders, we use the single-plateaued extension of
single-peakedness: a nonempty interval of the axis may form a top
indifference plateau, while preferences decrease strictly away from it.

\begin{defn}[Single-plateaued weak orders]
A weak order $\sigma\in\Omega$ is \emph{single-plateaued} with respect
to the fixed axis if there exist $1\le \ell\le r\le m$ such that
\[
a_\ell\sim_\sigma a_{\ell+1}\sim_\sigma\cdots\sim_\sigma a_r,
\]
every alternative in $\{a_\ell,\dots,a_r\}$ is strictly preferred to
every alternative outside this set,
\[
a_q\succ_\sigma a_p\quad\text{whenever }1\le p<q\le\ell,
\]
and
\[
a_p\succ_\sigma a_q\quad\text{whenever }r\le p<q\le m.
\]
Write $\Omega_{\mathrm{SP}}\subseteq\Omega$ for the set of such weak
orders.
\end{defn}

We first ask whether Borda aggregation of single-plateaued weak-order
profiles yields single-plateaued targets. The following lemma
establishes this for $m=3$.

\begin{lem}[Borda aggregation preserves single-plateauedness for $m=3$]
\label{lem:sp_borda_m3}
Let $m=3$ with axis $a_1\prec a_2\prec a_3$ on $\Alts$.
If $\sigma_j\in\Omega_{\mathrm{SP}}$ for all $j\in V$ and $W$ is
row-stochastic, then
$\tau_i(\sigma)=T(s_i(\sigma))\in\Omega_{\mathrm{SP}}$
for every node $i\in V$.
\end{lem}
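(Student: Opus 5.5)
The plan is to replace the combinatorial definition of $\Omega_{\mathrm{SP}}$ (for $m=3$) by a condition on score vectors that is preserved under convex combination. Writing $s=(s_1,s_2,s_3)$ for the scores of $(a_1,a_2,a_3)$, I will show
\[
\omega\in\Omega_{\mathrm{SP}}\iff \text{either } b_2(\omega)>\min\{b_1(\omega),b_3(\omega)\}\ \text{ or } \omega \text{ is the full indifference } (a_1\sim a_2\sim a_3),
\]
and, more generally, that for any $s\in\R^3$, $T(s)\in\Omega_{\mathrm{SP}}$ iff $s_2>\min\{s_1,s_3\}$ or $s_1=s_2=s_3$. Since $T(s)$ depends only on the order pattern of $s$, this needs only a case check over the $13$ elements of $\Omega(3)$. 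From the definition with axis $a_1\prec a_2\prec a_3$ and plateau $[\ell,r]$, the non-single-plateaued weak orders should be exactly five:
\[
a_1\succ a_3\succ a_2,\quad a_3\succ a_1\succ a_2,\quad a_1\sim a_3\succ a_2,\quad a_1\succ a_2\sim a_3,\quad a_3\succ a_1\sim a_2 .
\]
The last two fail because the definition requires strict decrease away from the plateau. In each of these five, $a_2$ is weakly last but not all three are tied; in the remaining eight, $a_2$ is strictly above some alternative or all are tied. The averaged Borda convention (scores $2,1,0$ with ties averaged) matches this order pattern, so the characterization transfers to $b(\omega)$.

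Next I define $\phi(s):=s_2-\min\{s_1,s_3\}$. This function is concave, being a linear term minus a minimum of linear forms, and it is positively homogeneous. By the characterization, $\phi(b(\sigma_j))\ge 0$ for every $\sigma_j\in\Omega_{\mathrm{SP}}$, with equality only when $\sigma_j$ is the full indifference. Since $W$ is row-stochastic, $s_i(\sigma)=\sum_j w_{ij}b(\sigma_j)$ is a convex combination, and concavity gives
\[
\phi\bigl(s_i(\sigma)\bigr)\;\ge\;\sum_j w_{ij}\,\phi\bigl(b(\sigma_j)\bigr)\;\ge\;0 .
\]

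It remains to handle the case $\phi(s_i)=0$. If $\phi(s_i)>0$, then $T(s_i)\in\Omega_{\mathrm{SP}}$ immediately. If $\phi(s_i)=0$, every in-neighbor $j$ with $w_{ij}>0$ must have $\phi(b(\sigma_j))=0$, so every such $\sigma_j$ is the full indifference. Then $s_i=(1,1,1)$, and $T(s_i)$ is the full indifference, which lies in $\Omega_{\mathrm{SP}}$ (take $\ell=1$, $r=3$).

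I expect the main obstacle to be the first step: getting the characterization exactly right. The boundary cases with ties, such as $a_1\succ a_2\sim a_3$ versus $a_2\succ a_1\sim a_3$, must be read carefully against the strictness requirements in the definition. Once the non-single-plateaued set is pinned down, the concavity argument is short.
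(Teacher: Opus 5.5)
Your first step is correct: the five non-single-plateaued orders are exactly those with $s_2\le\min\{s_1,s_3\}$ and not all scores equal, which matches the paper's list. The gap is in the second step.

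The function $\phi(s)=s_2-\min\{s_1,s_3\}=\max\{s_2-s_1,\,s_2-s_3\}$ is \emph{convex}, not concave. A minimum of linear forms is concave, so subtracting it gives a convex function. Jensen therefore yields only $\phi(s_i)\le\sum_j w_{ij}\phi(b(\sigma_j))$, which is the useless direction. A concrete check: $b(a_1\succ a_2\succ a_3)=(2,1,0)$ and $b(a_3\succ a_2\succ a_1)=(0,1,2)$ both have $\phi=1$, yet their average $(1,1,1)$ has $\phi=0$. The same example refutes your equality-case step, since $\phi(s_i)=0$ there although neither in-neighbor is the full indifference.

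This is not just an inequality pointing the wrong way. The set $\{\phi\ge 0\}$ is the union of the half-spaces $\{s_2\ge s_1\}$ and $\{s_2\ge s_3\}$, and it is not convex. For example, $(0,\tfrac12,\tfrac52)$ and $(\tfrac52,\tfrac12,0)$ both have $\phi=\tfrac12$, but their average $(\tfrac54,\tfrac12,\tfrac54)$ has $\phi=-\tfrac34$. So no argument that uses only the sign of $\phi$ at the in-neighbors can work. You have to use where the eight single-plateaued Borda vectors actually lie. The lemma is true, but not for the reason you give.

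The missing idea, which is exactly the paper's proof, is a \emph{linear} separating functional $\psi(s)=2s_2-s_1-s_3$.
\begin{itemize}
\item A direct check of the eight vectors shows $\psi(b(\sigma))\ge 0$ for every $\sigma\in\Omega_{\mathrm{SP}}$, with equality at $(2,1,0)$, $(0,1,2)$ and $(1,1,1)$.
\item Linearity, nonnegative weights and row sums equal to $1$ then give $\psi(s_i)\ge 0$.
\item Conversely, suppose $s_2\le\min\{s_1,s_3\}$ and the scores are not all equal. Then $s_2<\max\{s_1,s_3\}$, so $2s_2<s_1+s_3$, i.e.\ $\psi(s)<0$. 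Hence $T(s_i)$ cannot be one of the five excluded orders.
\end{itemize}
Your identification of those five orders is all you need from your first step, and the equality-case analysis becomes unnecessary.
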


\begin{proof}
For $m=3$, the single-plateaued weak orders with respect to the stated
axis, together with their averaged Borda score vectors, are
\[
\begin{array}{c|c}
\sigma & b(\sigma) \\ \hline
a_1\succ a_2\succ a_3 & (2,1,0) \\
a_2\succ a_1\succ a_3 & (1,2,0) \\
a_2\succ a_3\succ a_1 & (0,2,1) \\
a_3\succ a_2\succ a_1 & (0,1,2) \\
a_1\sim a_2\succ a_3 & (\tfrac32,\tfrac32,0) \\
a_2\succ a_1\sim a_3 & (\tfrac12,2,\tfrac12) \\
a_2\sim a_3\succ a_1 & (0,\tfrac32,\tfrac32) \\
a_1\sim a_2\sim a_3 & (1,1,1).
\end{array}
\]
Each of these vectors satisfies
\[
2b(\sigma)_{a_2}\ge b(\sigma)_{a_1}+b(\sigma)_{a_3}.
\]
Because the weights are nonnegative and row-stochastic, the aggregate
score vector satisfies
\[
2(s_i)_{a_2}
=\sum_{j\in V}w_{ij}\,2b(\sigma_j)_{a_2}
\ge\sum_{j\in V}w_{ij}
\bigl(b(\sigma_j)_{a_1}+b(\sigma_j)_{a_3}\bigr)
=(s_i)_{a_1}+(s_i)_{a_3}.
\]
If $T(s_i)\notin\Omega_{\mathrm{SP}}$, then it is one of
\[
a_1\sim a_3\succ a_2,\qquad
a_1\succ a_3\succ a_2,\qquad
a_3\succ a_1\succ a_2,\qquad
a_1\succ a_2\sim a_3,\qquad
a_3\succ a_1\sim a_2.
\]
Every score vector having one of these five weak orders as its image
under $T$ satisfies
\[
2(s_i)_{a_2}<(s_i)_{a_1}+(s_i)_{a_3},
\]
a contradiction. Hence $T(s_i)\in\Omega_{\mathrm{SP}}$.
\end{proof}

If the state space is restricted to strict orders and elementary moves
are adjacent swaps, single-peakedness need not be preserved by a
bounded step. Indeed, for $m=3$ with axis
$a_1\prec a_2\prec a_3$, the path
\[
a_1\succ a_2\succ a_3
\;\longrightarrow\;
a_1\succ a_3\succ a_2
\;\longrightarrow\;
a_3\succ a_1\succ a_2
\;\longrightarrow\;
a_3\succ a_2\succ a_1
\]
is a geodesic in the strict-order adjacent-swap graph: it has length
$3$, equal to the number of pairwise reversals between its endpoints.
The endpoint orders are single-peaked, while the two intermediate orders
are not. This example concerns the strict-order adjacent-swap graph,
not the weak-order cover graph $H$ used in this paper. In $H$, the
corresponding route requires $6$ cover moves, whereas the two endpoint
orders have distance $4$, for example through the fully indifferent weak order. Thus this
example makes no claim about preservation of single-peakedness by
Variant~S in $H$.

For the unbounded map (Variant~S$^\infty$), the geodesic issue
disappears entirely: the agent jumps directly to $\tau_i(\sigma)$
without traversing any intermediate state.
Combined with Lemma~\ref{lem:sp_borda_m3}, this gives the following
invariance result.

\begin{prop}[Invariance of the single-plateaued domain under Variant~S$^\infty$, $m=3$]
\label{prop:sp_invariance}
Let $m=3$ with a fixed axis on $\Alts$.
Under Variant~S$^\infty$, if $\sigma(0)\in\Omega_{\mathrm{SP}}^V$,
then $\sigma(t)\in\Omega_{\mathrm{SP}}^V$ for all $t\ge 0$.
\end{prop}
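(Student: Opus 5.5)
The plan is an induction on $t$, using Lemma~\ref{lem:sp_borda_m3} as the one-step invariance engine. The base case $t=0$ is the hypothesis $\sigma(0)\in\Omega_{\mathrm{SP}}^V$. For the inductive step, assume $\sigma(t)\in\Omega_{\mathrm{SP}}^V$, and split the nodes into persistent and free ones.

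\textbf{Persistent nodes.} For $p\in P$ we have $\sigma_p(t+1)=\sigma_p(t)=\bar\sigma_p$. This lies in $\Omega_{\mathrm{SP}}$ because it coincides with $\sigma_p(0)$, which is single-plateaued by hypothesis. (Implicitly, the hypothesis $\sigma(0)\in\Omega_{\mathrm{SP}}^V$ includes the pinned boundary data.)

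\textbf{Free nodes.} For $i\in U$, Definition~\ref{defn:variantSinf} gives $\sigma_i(t+1)=\tau_i(\sigma(t))=T(s_i(\sigma(t)))$. The profile $\sigma(t)$ consists entirely of single-plateaued weak orders by the inductive hypothesis, and $W$ is row-stochastic by \eqref{eq:row_stochastic}. Lemma~\ref{lem:sp_borda_m3} therefore applies directly and yields $\tau_i(\sigma(t))\in\Omega_{\mathrm{SP}}$. This step uses that the aggregate $s_i$ draws on all in-neighbours, persistent ones included, which is why the persistent states must also be in $\Omega_{\mathrm{SP}}$. Combining the two cases gives $\sigma(t+1)\in\Omega_{\mathrm{SP}}^V$, completing the induction.

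\textbf{Main obstacle.} There is no real difficulty, since the jump map has no intermediate states and so the geodesic issue discussed before the statement does not arise. The only care needed concerns tie-breaking in $T$. By the paper's definition, $T$ maps ties in $s$ to indifference classes, so $T$ is single-valued and deterministic tie-breaking cannot introduce a non-plateaued order. Lemma~\ref{lem:sp_borda_m3} already covers the tied images, since its proof treats weak orders with indifference among the five excluded cases.
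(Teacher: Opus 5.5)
Your proof is correct and matches the paper's: induction on $t$, with Lemma~\ref{lem:sp_borda_m3} handling free nodes via $\sigma_i(t+1)=\tau_i(\sigma(t))$, and persistent nodes handled by $\bar\sigma_p=\sigma_p(0)\in\Omega_{\mathrm{SP}}$. Your remarks that persistent in-neighbours enter $s_i$ and that $T$ is single-valued on tie hyperplanes are accurate, and both points are already covered by the lemma's hypotheses and proof.
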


\begin{proof}
By induction on $t$.
At each step, $\sigma_i(t+1)=\tau_i(\sigma(t))$ for nonpersistent
nodes $i\in U$.
If $\sigma(t)\in\Omega_{\mathrm{SP}}^V$, then by
Lemma~\ref{lem:sp_borda_m3}, $\tau_i(\sigma(t))\in\Omega_{\mathrm{SP}}$
for every $i\in U$.
Persistent nodes satisfy $\sigma_p(t)=\bar\sigma_p\in\Omega_{\mathrm{SP}}$
by assumption.
Hence $\sigma(t+1)\in\Omega_{\mathrm{SP}}^V$.
\end{proof}

\begin{rem}
An analogous invariance result for Variant~S (bounded step) would
require $\Omega_{\mathrm{SP}}$ to be geodesically convex in $H$, i.e.,
that every shortest path in $H$ between two single-plateaued weak orders
remain within $\Omega_{\mathrm{SP}}$. The strict-order adjacent-swap
example above does not resolve this question for the weak-order cover
graph $H$. We leave the bounded-update question on $H$ for future work.
\end{rem}

Restricting the dynamics to $\Omega_{\mathrm{SP}}$ effectively
prunes the available transitions in the move graph.
As a result, some of the mechanisms that generate oscillatory behavior
in the unrestricted domain---such as those based on directed cycles or
bipartite forcing configurations---may no longer be realizable.
Understanding how such structural restrictions affect convergence
properties remains an interesting direction for future work.

A related robustness question concerns the restriction to strict
preferences. In this case, the state space reduces to the set of
linear orders on $\Alts$, corresponding to the subset of $\Omega$
without indifferences. The bounded-step dynamics and the move
graph $H$ can be naturally restricted to this subset, yielding a
reduced transition structure in which ties are no longer permitted.

While this restriction eliminates certain transitions that rely on
indifferences, the qualitative features of the dynamics remain
similar. In particular, mechanisms leading to oscillatory behavior,
such as those induced by network structure, may still arise within
the strict domain, although their realization depends on the
reduced connectivity of the move graph.

The restriction to strict preferences can be viewed as a particular
instance of a broader class of \emph{$r$-step rules}: for any integer
$r\ge 1$, define $\textsf{Step}_r(\alpha,\beta)$ as the state reached
by moving at most $r$ steps along a shortest path in $H$ from $\alpha$
toward $\beta$ (with $\textsf{Step}_1 = \textsf{Step}$ and
$\textsf{Step}_r = \textsf{Jump}$ for $r\ge\mathrm{diam}(H)$).

\begin{prop}[$r$-step robustness of oscillations]
\label{prop:r_step_robustness}
Let $r\ge 1$ be any integer.
\begin{description}
  \item[(a)] Under the hypotheses of Theorem~\ref{thm:self_traveling_wave},
  the self-oscillation of period $k$ persists under $\textsf{Step}_r$ for
  every $r\ge 1$.
  \item[(b)] Under the hypotheses of Theorem~\ref{thm:period2_forcing},
  a period--$2$ orbit exists under $\textsf{Step}_r$ for every $r\ge1$.
  For $r=1$, it alternates between $(\rho,\rho^\dagger)$ and
  $(\eta,\eta)$; for $r\ge2$, it alternates between
  $(\rho,\rho^\dagger)$ and $(\rho^\dagger,\rho)$.
\end{description}
\end{prop}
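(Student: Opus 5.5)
The plan is to reduce both parts to the explicit computations already carried out in the proofs of Theorems~\ref{thm:self_traveling_wave} and~\ref{thm:period2_forcing}. The tool in each case is the observation that $\textsf{Step}_r(\alpha,\beta)=\beta$ whenever $d_H(\alpha,\beta)\le r$. If $d_H(\alpha,\beta)>r$, then $\textsf{Step}_r$ lands on the vertex at distance $r$ from $\alpha$ along the chosen geodesic. This vertex is unique if the geodesic is unique.

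For (a), I would rerun the induction $\sigma_i(t)=\omega_{(i-t)\bmod k}$ from the proof of Theorem~\ref{thm:self_traveling_wave}. By Remark~\ref{rem:targeting_condition}, the targeting condition gives $\tau_i(\sigma(t))=\sigma_{i-1}(t)=\omega_{(i-t-1)\bmod k}$. Successive states on the $H$-cycle are adjacent, so $d_H(\omega_r,\omega_{r-1})=1\le r$ for every $r\ge1$. Hence $\textsf{Step}_r(\omega_r,\omega_{r-1})=\omega_{r-1}$, because the target is reached within the allowed budget and no overshoot is possible. The induction step is then verbatim, and so is the conclusion that the period is exactly $k$ by distinctness of the $\omega_j$. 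This is the same argument as Proposition~\ref{prop:unbounded_fixed_periodic}(c).

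For (b) I would split into $r=1$ and $r\ge2$. For $r=1$, $\textsf{Step}_1=\textsf{Step}$, so Theorem~\ref{thm:period2_forcing}(b) applies directly. For $r\ge2$, I use that in the $m=2$ cover graph $H$ (the path $\rho-\eta-\rho^\dagger$) we have $\mathrm{diam}(H)=2$. Thus every $d_H$ between states is at most $2\le r$, and $\textsf{Step}_r$ coincides with $\textsf{Jump}$ on $\Omega(2)$. The dynamics is then Variant~S$^\infty$, and Theorem~\ref{thm:period2_forcing}(a) gives the orbit $(\rho,\rho^\dagger)\leftrightarrow(\rho^\dagger,\rho)$. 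For completeness I would re-display the two score computations: $s_i=(\varepsilon,1-\varepsilon)$ and $s_j=(1-\varepsilon,\varepsilon)$ at $(\rho,\rho^\dagger)$, and $s_i=(1,0)$ and $s_j=(0,1)$ at $(\rho^\dagger,\rho)$. These confirm the targets, with $\varepsilon<\tfrac12$ ensuring strict inequalities.

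The only step needing care is the definition of $\textsf{Step}_r$ when the target is closer than $r$. I read ``at most $r$ steps'' as stopping at the target, and I would state this convention explicitly. This reading makes the identification $\textsf{Step}_r=\textsf{Jump}$ for $r\ge\mathrm{diam}(H)=2$ in the $m=2$ setting legitimate. It also ensures that, in (a), no node overshoots past $\omega_{r-1}$ along the $H$-cycle.
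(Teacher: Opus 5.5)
Your proposal is correct and follows the paper's proof essentially step for step. In (a), the adjacency $d_H(\omega_r,\omega_{r-1})=1$ makes $\textsf{Step}_r$ act as $\textsf{Jump}$ along the orbit, so the period-$k$ induction of Proposition~\ref{prop:unbounded_fixed_periodic}(c) carries over; in (b), $r=1$ is Theorem~\ref{thm:period2_forcing}(b) and $r\ge2$ reduces to the Variant~S$^\infty$ orbit of Theorem~\ref{thm:period2_forcing}(a), where your global identification $\textsf{Step}_r=\textsf{Jump}$ via $\mathrm{diam}(H)=2$ is only a cosmetic variant of the paper's check $d_H(\rho,\rho^\dagger)=2\le r$ (minor cleanup: rename the $H$-cycle index so it does not clash with the step budget $r$).
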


\begin{proof}
\textbf{(a)} Under the hypotheses of Theorem~\ref{thm:self_traveling_wave},
the $H$-cycle adjacency condition requires that successive states
$\omega_r$ and $\omega_{r-1}$ on the $H$-cycle are adjacent in $H$,
i.e., $d_H(\omega_r,\omega_{r-1})=1$.
Hence $d_H(\sigma_i(t),\tau_i(\sigma(t)))=1\le r$ at every step of the
orbit, so $\textsf{Step}_r(\sigma_i(t),\tau_i(\sigma(t)))=\tau_i(\sigma(t))$
for any $r\ge 1$.
The orbit is therefore identical to the one produced by $\textsf{Jump}$,
and by Proposition~\ref{prop:unbounded_fixed_periodic}(c) it has period $k$.

\textbf{(b)} For $r=1$, the claim is part~(b) of
Theorem~\ref{thm:period2_forcing}. For $r\ge2$,
$d_H(\rho,\rho^\dagger)=2\le r$, so
$\textsf{Step}_r(\rho,\rho^\dagger)=\rho^\dagger$ and
$\textsf{Step}_r(\rho^\dagger,\rho)=\rho$. Hence the period--$2$
orbit is the Variant~S$^\infty$ orbit in part~(a) of
Theorem~\ref{thm:period2_forcing}.

\end{proof}

\begin{rem}
Proposition~\ref{prop:r_step_robustness} shows that the oscillatory results
are robust not only to the removal of the step-size bound (Proposition~\ref{prop:unbounded_fixed_periodic}) but to any relaxation of it.
The oscillations constructed in Theorems~\ref{thm:self_traveling_wave}
and~\ref{thm:period2_forcing} are features of the Borda-aggregation coupling
and the network topology, not of the specific value of the step-size parameter $r$.
A systematic comparison of convergence rates and transient behavior across
different values of $r$ is left for future work.
\end{rem}

\subsection{Conclusion}

This work demonstrates that the combination of network structure, ordinal state spaces, and bounded-step constraints creates a fertile ground for nontrivial dynamics. The Borda aggregation rule, grounded in classical social-choice theory, provides a principled and tractable aggregation mechanism that respects the geometry of preference orderings. By developing sufficient conditions for self-oscillations and forced oscillations in terms of directed topology, spectral properties, and the preference move graph, we bridge network dynamics and discrete preference geometry—a connection that may prove valuable for understanding polarization, consensus formation, and collective decision-making in complex social systems.

\bibliographystyle{unsrtnat}
\bibliography{references}

\end{document}